\newtheorem{theorem}{Theorem}
\newtheorem{lemma}[theorem]{Lemma}
\newtheorem{proposition}[theorem]{Proposition}
\newtheorem{corollary}[theorem]{Corollary}
\newtheorem{remark}[theorem]{Remark}
\newcommand{\R}{\mathbb{R}}
\newcommand{\Z}{\mathbb{Z}}
\newcommand{\C}{\mathbb{C}}
\newcommand{\N}{\mathbb{N}}
\newcommand{\bS}{\mathbb{S}}
\newcommand{\ii}{\infty}
\newcommand\1{{\ensuremath {\mathds 1} }}
\renewcommand\phi{\varphi}
\newcommand{\wto}{\rightharpoonup}
\newcommand{\cZ}{\mathcal{Z}}
\newcommand{\cP}{\mathcal{P}}
\newcommand{\cB}{\mathcal{B}}
\newcommand{\cE}{\mathcal{E}}
\newcommand{\cH}{\mathcal{H}}
\newcommand{\cL}{\mathcal{L}}
\newcommand{\Per}{{\rm Per}\,}
\renewcommand{\geq}{\geqslant}
\renewcommand{\leq}{\leqslant}
\renewcommand{\tilde}{\widetilde}
\newcommand{\eps}{\varepsilon}
\newcommand{\nn}{\nonumber}
\newcommand{\rd}{\mathrm{d}}
\newcommand{\dx}{\rd x}
\newcommand{\dy}{\rd y}
\title[Liquid Drop Model in the Low Density Limit]{Liquid Drop Model for Nuclear Matter\\ in the Low Density Limit}
\date{\today.}
\author[R.L. Frank]{Rupert L. Frank}
\address[Rupert L. Frank]{Mathematisches Institut, Ludwig-Maximilans Universit\"at M\"unchen, Theresienstr. 39, 80333 M\"unchen, Germany, and Munich Center for Quantum Science and Technology (MCQST), Schellingstr. 4, 80799 M\"unchen, Germany}
\email{r.frank@lmu.de}
\author[M. Lewin]{Mathieu Lewin}
\address[Mathieu Lewin]{CEREMADE, CNRS, Université Paris-Dauphine, PSL Research University, Place de Lattre de Tassigny, 75016 Paris, France}
\email{mathieu.lewin@math.cnrs.fr}
\author[R. Seiringer]{Robert Seiringer}
\address[Robert Seiringer]{IST Austria (Institute of Science and Technology Austria), Am Campus 1, 3400 Klosterneuburg, Austria}
\email{robert.seiringer@ist.ac.at}
\begin{document}

\begin{abstract}
We consider the liquid drop model with a positive background density in the thermodynamic limit. We prove a two-term asymptotics for the ground state energy per unit volume in the dilute limit. Our proof justifies the expectation that optimal configurations consist of droplets of unit size that arrange themselves according to minimizers for the Jellium problem for point particles. In particular, we provide the first rigorous derivation of what is known as the gnocchi phase in astrophysics.

\bigskip

\noindent \sl \copyright~2025 by the authors. This paper may be reproduced, in its entirety, for non-commercial purposes.
\end{abstract}

\maketitle

\section{Introduction and main result}
In this paper we provide the first rigorous justification of the \emph{``gnocchi phase''}, which is one of the nuclear pasta phases that are believed to form in the outer crust of neutron stars. We work with the simplest physical model used to describe this situation, called the \emph{liquid drop}. It is due to Gamow~\cite{Gamow-29,Gamow-30} and was later refined by Bohr and Wheeler~\cite{BohWhe-39}. The same model can also describe microphase separation of diblock copolymers~\cite{OhtKaw-86}.

From a broader perspective, this is an example of a pattern formation due to the competition of attractive geometric short range effects and long range repulsive forces.

Let $\Lambda\subset\R^3$ be a bounded open set which represents a piece of the external crust of a neutron star. We work at the microscopic scale and should therefore think that $\Lambda$ is really huge. Later we will in fact take the limit $\Lambda\nearrow\R^3$. In the outer part of the crust of a neutron star the pressure is large enough that most of the electrons are pulled out of the atoms, but still not strong enough to dissociate the nuclei into protons and neutrons. In the regime of interest for this work, the nuclei will form aggregates behaving like liquid droplets and evolving in a uniform sea of electrons. Assuming that the nuclei form a constant density phase over an unknown Borel set $\Omega\subset\Lambda$, the liquid drop model postulates that the energy is given by
\begin{equation}
\boxed{\cE_\Lambda[\rho,\Omega]:=\Per(\Omega)+\frac12\iint_{\R^3\times\R^3}\frac{(\1_\Omega-\rho\1_{\Lambda})(x)(\1_\Omega-\rho\1_{\Lambda})(y)}{|x-y|}\,\dx\,\dy.}
\label{eq:def_energy}
\end{equation}
Here $\Per(\Omega)$ denotes the surface area of a smooth set $\Omega$. For general measurable sets it is understood as the perimeter in the sense of geometric measure theory, see~\cite{Maggi-12} for a textbook introduction. The perimeter takes into account the short range interactions between the nuclei in a droplet. The second term is the electrostatic energy due to the fact that electrons and protons are, respectively, negatively and positively charged. The constant $\rho$ is the relative fraction of delocalized electrons. We have chosen units so that the local nuclear density and the constant in front of the perimeter are both equal to 1. In physical units they depend on the proton-neutron ratio. Note that protons are not able to bind without neutrons.

If we swap the roles of $\Omega$ and $\Lambda\setminus\Omega$, we find
\begin{equation}
\cE_\Lambda[\rho,\Lambda\setminus \Omega]=\Per(\Lambda)+\cE_\Lambda[1-\rho,\Omega].
\label{eq:symmetry_rho}
\end{equation}
In a neutron star, matter is essentially neutral,
$$|\Omega|\approx\rho|\Lambda|.$$
Since $\Omega\subset\Lambda$, this implies $0\leq \rho\leq1$ and then by symmetry we can assume $0\leq \rho\leq 1/2$.

At equilibrium the nuclear phase $\Omega\subset\Lambda$ minimizes the energy functional in~\eqref{eq:def_energy}. Working under the constraint that the system is exactly neutral, the minimum equals
\begin{equation}
E_\Lambda(\rho):=\min_{\substack{\Omega\subset\Lambda\\ |\Omega|=\rho|\Lambda|}}\cE_\Lambda[\rho,\Omega].
\label{eq:def_E}
\end{equation}
A natural question is to study the shape of these minimizers in the low density regime. As a first step in this direction, we will study the resulting energy asymtotics in this regime. We will first take $\Lambda\nearrow\R^3$ before we actually study the limit $\rho\to0$. This order of the limits is important for the result to be physically meaningful.

\subsection{Thermodynamic limit}
We first look at the limit of an infinite piece of matter, which is justified since the model is microscopic.  The following result states that the energy per unit volume converges to a limit independent of the shape of $\Lambda$.

\begin{theorem}[Thermodynamic limit]\label{thm:thermo_limit}
Let $0\leq\rho\leq1$. Let $\Lambda_n$ be a sequence of $(r,L)$--Lipschitz open sets, so that
\begin{equation}
B(0,\ell_n/C)\subset \Lambda_n\subset B(0,\ell_n)
\label{eq:ass_balls_Lambda_n}
\end{equation}
for some $\ell_n\to\ii$ and some $C$. We also assume its boundary satisfies the Fisher regularity condition
\begin{equation}
\big|\partial\Lambda_n+B(0,h)\big|\leq C\ell_n^2 h,\qquad \forall h\leq \ell_n.
\label{eq:Fisher}
\end{equation}
Then the following limit
\begin{equation}
\boxed{e(\rho):=\lim_{n\to\ii}\frac{E_{\Lambda_n}(\rho)}{|\Lambda_n|}}
\label{eq:thermo_limit}
\end{equation}
exists and does not depend on the sequence $\Lambda_n$, nor on the constants $r,L,C$. The resulting energy is symmetric about $\rho=1/2$: $e(\rho)=e(1-\rho)$.
\end{theorem}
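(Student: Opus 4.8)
The plan is to establish the limit first for the exhausting sequence of cubes $C_R:=(-R/2,R/2)^3$ by an (almost-)subadditivity argument, and then to transfer it to a general sequence $\Lambda_n$ by a two-sided comparison. Two elementary observations set the stage: first, $\cE_\Lambda[\rho,\Omega]\ge0$ always, since $\Per(\Omega)\ge0$ and the Coulomb kernel $|x-y|^{-1}$ is of positive type, so the electrostatic term is nonnegative; second, an explicit periodic trial configuration (background $\rho$, unit droplets placed periodically and chosen so that each period cell is neutral) gives $E_{C_R}(\rho)\le C(\rho)\,R^3$. Hence $R\mapsto E_{C_R}(\rho)/R^3$ is bounded, and it suffices to show it converges.

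For the upper bound, tile $C_{kR}$ by $k^3$ copies of $C_R$ and insert in each a near-optimal configuration for $E_{C_R}(\rho)$ together with its neutralizing background; since the perimeter of a disjoint union is at most the sum of the perimeters and the constraint $|\Omega|=\rho|C_{kR}|$ is automatically met, one gets $E_{C_{kR}}(\rho)\le k^3E_{C_R}(\rho)+k^3\eps+(\text{electrostatic cross terms between cells})$. \textbf{Controlling these cross terms is the heart of the matter}, and this is where a \emph{screening} construction is needed: one perturbs the configuration of each cell inside a boundary layer of fixed width $\delta$ so that, after adding a thin, essentially spherically symmetric compensating charge, the charge $\mu_i$ carried by cell $i$ has zero net charge and zero dipole moment (the construction can be iterated to suppress a few further multipoles if needed); by Newton's theorem and multipole estimates the interaction of two such screened cells at distance $d$ then decays fast enough in $d$ that, after dividing by $|C_{kR}|$ and letting $k\to\infty$ and then $R\to\infty$, it contributes only $o_R(1)$ — \emph{provided} the net charge of a cell of a near-minimizer is a priori much smaller than $R^3$, which itself seems to require an auxiliary ``local neutrality on all scales'' bound for near-minimizers, and provided the perimeter and self-energy overhead of the layer is $o(R^3)$ (true for fixed $\delta$, since the layer has volume $O(R^2)$). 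Granting this, the Fekete-type lemma for almost-subadditive sequences yields the existence of $e(\rho):=\lim_{R\to\infty}E_{C_R}(\rho)/R^3$. The matching lower bound runs the construction in reverse: starting from a near-minimizer $\Omega$ on $C_{kR}$, one chooses (by averaging over translations of the grid, so that the cut surface meets $\Omega$ in area at most $C|C_{kR}|/R$) a good tiling into $R$-cubes, localizes $\mu=\1_\Omega-\rho\1_{C_{kR}}$ to the cells while screening each cell back to exact neutrality at the same controlled cost, and uses $\cE\ge0$ cell by cell together with the subadditivity of the perimeter under the localization; this gives $E_{C_{kR}}(\rho)/(kR)^3\ge e(\rho)-o(1)$.

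To pass to general $\Lambda_n$ one repeats the two arguments with $\Lambda_n$ in place of $C_{kR}$. For the upper bound, tile the inner region $\{x\in\Lambda_n:\operatorname{dist}(x,\partial\Lambda_n)\ge\delta\}$ by $R$-cubes and fill the remaining boundary layer — of volume $O(|\partial\Lambda_n+B(0,R)|)=O(\ell_n^2R)$ by the Fisher condition \eqref{eq:Fisher} — with unit droplets neutralizing the leftover background there, at cost $O(\ell_n^2R)$; dividing by $|\Lambda_n|\ge c\,\ell_n^3$ (which holds by the inner ball inclusion) and sending $n\to\infty$ then $R\to\infty$ gives $\limsup_n E_{\Lambda_n}(\rho)/|\Lambda_n|\le e(\rho)$. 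The lower bound is symmetric: restrict a near-minimizer on $\Lambda_n$ to a translated $R$-grid, screen the interior cells, discard the boundary layer (whose volume is again $o(\ell_n^3)$), and use $\cE\ge0$ cell by cell. This sandwich also shows the limit is independent of the sequence and of the constants $r,L,C$. Finally, the symmetry $e(\rho)=e(1-\rho)$ is immediate from \eqref{eq:symmetry_rho}: taking an optimal $\Omega$ for $E_{\Lambda_n}(1-\rho)$ and testing $E_{\Lambda_n}(\rho)$ with $\Lambda_n\setminus\Omega$ gives $E_{\Lambda_n}(\rho)\le\Per(\Lambda_n)+E_{\Lambda_n}(1-\rho)$, and conversely, while $\Per(\Lambda_n)\le C\ell_n^2=o(|\Lambda_n|)$ again by \eqref{eq:Fisher}; dividing by $|\Lambda_n|$ and passing to the limit gives the claim.

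The step I expect to be the genuine obstacle is the screening: arranging that the electrostatic interaction between the pieces of the decomposition is summable \emph{uniformly} in the number of pieces while the perimeter and self-energy overhead of the neutralizing layers stays negligible per unit volume; this in turn appears to force an a priori estimate that near-minimizers are locally neutral at every scale. The remaining ingredients — the Fekete lemma, the averaging over grid translations to absorb the cut perimeter, the Fisher bookkeeping near $\partial\Lambda_n$, and the symmetry — are routine once the screening lemma is in hand.
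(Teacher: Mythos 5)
Your outline correctly locates the difficulty in the electrostatic cross terms, but the screening step as you describe it does not work, and it is precisely the point where the paper takes a different (and essential) route. First, you cannot ``add a thin, essentially spherically symmetric compensating charge'': in the liquid drop model the charge is $\1_\Omega-\rho\1_\Lambda$ with $\1_\Omega$ of fixed height $1$, so the only admissible modifications are adding or removing sets of density one. This rigidity is exactly why the paper has to prove a separate ``quadrupole layer'' proposition (Proposition~\ref{prop:dipole_layer}), constructing sets $\Omega_\alpha\subset\Lambda_\alpha$ of \emph{fixed, order-one size} with zero net charge and zero dipole. Second, even granting charge- and dipole-free cells, your quantitative claim fails: for cells of side $R$ the residual quadrupole moments are generically of size $R^5$, so two neighbouring screened cells interact at order $R^{10}/R^5=R^5$, and summing over the lattice and dividing by $|C_{kR}|=(kR)^3$ leaves an error of order $R^2$ per unit volume --- growing, not $o_R(1)$. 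Making this small would require suppressing multipoles to an order growing with $R$, or an a priori ``local neutrality on all scales'' bound for near-minimizers, which you acknowledge needing but do not supply; such a bound is a substantial theorem in its own right and is not available here. The same obstruction, compounded by the fact that the cross terms of a signed measure have no sign and that ``screening a given minimizer back to neutrality at controlled cost'' is not justified under the characteristic-function constraint, breaks your lower bound (note also that $\cE\geq0$ cell by cell can only give a trivial bound; one needs each cell energy compared to $E_{C_R}(\rho)$, which presupposes per-cell neutrality of the unknown minimizer).

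The paper circumvents all of this in the Lieb--Lebowitz--Narnhofer way: the bulk decomposition uses \emph{balls} (a swiss-cheese packing of an exponentially growing sequence $B_K$ by copies of $B_0,\dots,B_{K-1}$), and in each ball a minimizer is inserted with an averaged rotation, so that by Newton's theorem the neutral ball generates \emph{exactly zero} potential outside --- the cross terms vanish identically rather than being estimated by multipole decay. Approximate (quadrupole-level) screening is used only on the leftover ``cheese'' and boundary layers, whose volume fraction is $O(\gamma^K)$, and there the cells $\Lambda_\alpha$ have fixed size $\eps$, so each pairwise interaction is $O(1)$ and the total is proportional to the (negligible) number of such cells. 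The general domain $\Lambda_n$ is then handled by packing its interior with these ball-filled cubes for the upper bound, and by packing the \emph{complement} $B_{K_n}\setminus\Lambda_n$ for the lower bound --- thereby never cutting the minimizer on $\Lambda_n$ at all; the interaction between the minimizer and the boundary quadrupoles is controlled by the $|x|^{-3}$ decay, at the price of letting $K\sim\log K_n$ grow. Your treatment of the Fisher bookkeeping and the symmetry $e(\rho)=e(1-\rho)$ via~\eqref{eq:symmetry_rho} is fine, but without a worked-out substitute for the exact Newton cancellation (or a proved local-neutrality estimate for near-minimizers), the core of your argument has a genuine gap.
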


For us an open set $\Lambda\subset\R^3$ is $(r,L)$--Lipschitz  if for any $x\in\partial\Lambda$, the set $\Lambda\cap B(x,r)$ is, in an appropriate coordinate system, equal to the epigraph of an $L$--Lipschitz function. Under this assumption, one can see that the Fisher condition~\eqref{eq:Fisher} holds for $h\leq Cr$. Hence~\eqref{eq:Fisher} is only an assumption for large $h$. Note that all the conditions are satisfied if $\Lambda_n$ is the rescaling of a fixed $(r,L)$--Lipschitz bounded open set containing the origin (for instance a cube or a ball). Our assumptions on the domain $\Lambda_n$ are not sharp and have been chosen because they are relatively simple.

Before we provide more comments on Theorem~\ref{thm:thermo_limit}, let us quickly describe the periodic version of the problem, for a better comparison with the existing literature. We take $\Lambda_L=(-L/2,L/2)^3$ a cube of side length $L$ and let $G_L$ denote the $L\Z^3$-periodic Coulomb potential. The Fourier coefficients of $G_L$ coincide with the Fourier transform of $1/|x|$ on the grid $(2\pi/L)\Z^3$. It is also proportional to the Green's function of the Laplacian with periodic boundary conditions. The zero mode $\widehat{G_L}(0)$ is taken equal to zero for simplicity. It is then known that the limit is the same as~\eqref{eq:thermo_limit}:
\begin{equation}
e(\rho)=\lim_{L\to\ii}\frac1{L^3}\min_{\substack{\Omega\subset\Lambda_L\\ |\Omega|=\rho L^3}}\left\{\Per(\Omega)+\frac12\iint_{\Omega^2}G_L(x-y)\dx\,\dy\right\}.
\label{eq:thermo_limit_periodic}
\end{equation}
Note that since we have fixed $\int_{\Lambda_L}G_L=0$, the electronic background disappears from the energy, but it is still present in the constraint that $|\Omega|=\rho L^3$.

The existence of the limit~\eqref{eq:thermo_limit_periodic} for the periodic model is a consequence of the seminal work~\cite{AlbChoOtt-09}, where the optimal convergence rate was obtained as well as information about the minimizers. That this also implies that the limit in~\eqref{eq:thermo_limit} exists for cubes and coincides with~\eqref{eq:thermo_limit_periodic} was noted in~\cite{EmmFraKon-20}. In Appendix~\ref{app:proof_thermo-limit} we provide a completely different proof of Theorem~\ref{thm:thermo_limit} and of the limit~\eqref{eq:thermo_limit_periodic} using classical methods for Coulomb systems introduced by Lieb, Lebowitz and Narnhofer in the 1970s~\cite{LieLeb-72,LieNar-75}. This proof works for any sequence $\Lambda_n$ satisfying the mentioned conditions.

\subsection{Low density and pasta phases.}
Next we consider the limit $\rho\to0$ for the energy per unit volume $e(\rho)$ of an infinite piece of nuclear matter with average density $\rho$ of electrons and nuclear droplets. We state our theorem right now, but the constants appearing in the statement will be defined a bit later. This is the \textbf{main result of the paper}.

\begin{theorem}[Low density]\label{thm:low_density}
In the limit $\rho\to0$, we have the expansion
\begin{equation}
\boxed{e(\rho)=\mu_*\,\rho+m_*^{\frac23}\,e_{\rm Jel}\,\rho^{\frac43}+o(\rho^{\frac43})}
\label{eq:low_density_expansion}
\end{equation}
where $\mu_*$ and $m_*$ are the energy per unit volume and mass of an isolated nuclear droplet defined below in~\eqref{eq:def_mu_*} and~\eqref{eq:def_m_*}, whereas $e_{\rm Jel}<0$  is the Jellium energy defined below in~\eqref{eq:Jellium}
\end{theorem}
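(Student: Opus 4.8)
The strategy is to prove matching asymptotic upper and lower bounds for $e(\rho)$. Both rest on the heuristic that, at low density, a near-optimal phase consists of about $\rho|\Lambda|/m_*$ droplets, each a near-optimal isolated droplet of volume $m_*$, whose barycenters sit at the points of a rescaled Jellium minimizer, i.e.\ behave like point charges $m_*$ in a neutralizing background of density $\rho$. It is convenient to work with the periodic formulation~\eqref{eq:thermo_limit_periodic}. Besides Theorem~\ref{thm:thermo_limit}, one needs the relevant facts about the isolated droplet, namely that the infimum defining $\mu_*$ is attained by a bounded set $\Omega_*$ of volume $m_*$, whence by the definition of $\mu_*$ one has $\Per(\Omega)+\tfrac12\iint_{\Omega\times\Omega}|x-y|^{-1}\dx\dy\geq\mu_*|\Omega|$ for every set of finite volume; and the Jellium scaling law, that the ground state energy per unit volume of point charges $q$ at number density $n$ equals $q^2n^{4/3}e_{\rm Jel}$, so that $N$ charges $m_*$ at density $\rho/m_*$ contribute $m_*^{2/3}e_{\rm Jel}\rho^{4/3}$ per unit volume.

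\emph{Upper bound.} I would fix $\rho$ small, take a periodic box of side $L$, and place $N=\lfloor\rho L^3/m_*\rfloor$ disjoint translates of $\Omega_*$, each centered at its barycenter, at the points of a near-optimal configuration for the periodic Jellium problem at density $\rho/m_*$, adjusting the volume constraint by a negligible amount. The droplets are disjoint for $\rho$ small since their typical separation $(m_*/\rho)^{1/3}\to\infty$ while each has bounded diameter. Expanding the energy, the perimeter and the Coulomb self-energies of the droplets combine to $N\mu_*m_*=\rho\mu_*L^3$ up to $o(L^3)$, while the inter-droplet terms, after a multipole expansion in which the barycentric centering kills the dipole contributions, reproduce the Jellium energy of the point configuration up to corrections that one must show are $o(\rho^{4/3}L^3)$ \emph{uniformly in $L$}. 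This needs care: the crude bound on the quadrupole corrections is only logarithmically divergent in $L$, and must be controlled by taking the droplet to have vanishing quadrupole moment or by exploiting the near-isotropy of near-optimal Jellium configurations. Letting $L\to\infty$ and then $\rho\to0$ yields $e(\rho)\leq\mu_*\rho+m_*^{2/3}e_{\rm Jel}\rho^{4/3}+o(\rho^{4/3})$.

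\emph{Lower bound.} Let $\Omega$ be a minimizer in the periodic box. The leading term $\mu_*\rho L^3$ comes from the perimeter plus the Coulomb self-energy of $\Omega$ alone; the difficulty is that the remaining background terms are individually of order $L^5$ and cancel only when kept together with part of the self-energy, so one cannot split them off by hand. One therefore keeps the full quadratic form and writes it in field form, $\tfrac12\iint\frac{(\1_\Omega-\rho\1_\Lambda)(x)(\1_\Omega-\rho\1_\Lambda)(y)}{|x-y|}\dx\dy=\tfrac1{8\pi}\int|\nabla\phi|^2$ with $-\Delta\phi=4\pi(\1_\Omega-\rho\1_\Lambda)$, and analyzes it jointly with the perimeter. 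After reducing, if necessary, to the case where all pieces of $\Omega$ have volume at most a fixed $M$ — possible because a piece of large volume carries a strictly superlinear energy excess, so that comparison with the upper bound forces the total volume of large pieces to be $o(\rho L^3)$ — the self-energy of each bounded piece is at least $\mu_*$ times its volume, summing to $\mu_*\rho L^3$, and it remains to bound the interaction part below by the Jellium energy of the point charges (the volumes of the pieces) at their barycenters, hence by the Jellium minimum $m_*^{2/3}e_{\rm Jel}\rho^{4/3}L^3(1+o(1))$.

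This last step is the main obstacle, for the same reason as in the upper bound but more acutely: a naive multipole reduction of the bounded pieces to point charges gives error sums that are logarithmically divergent in $L$, so no bound uniform in $L$ emerges. The remedy is a screening argument built on the local, nonnegative quantity $\int|\nabla\phi|^2$: partition the torus into mesoscopic cells of side $\ell$ with $(m_*/\rho)^{1/3}\ll\ell\ll L$, localize the field energy to the cells, correct the small charge imbalance of each cell by a thin surface layer — which only increases the energy — so that each cell becomes neutral and essentially decoupled from the rest, and compare cell by cell with the Jellium ground state, which, since $\ell\to\infty$, supplies precisely the density $e_{\rm Jel}(\rho/m_*)^{4/3}m_*^2$ per unit volume in the thermodynamic limit. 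This is where $e_{\rm Jel}$ and the uniformity in $L$ come from; the remaining items — the large pieces, the volume adjustments, and replacing the finite-volume Jellium minimum by its thermodynamic value — are comparatively routine. Matching the two bounds gives~\eqref{eq:low_density_expansion}.
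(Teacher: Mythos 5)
Your overall heuristic (droplets of mass $m_*$ behaving like point charges in a background, matching upper and lower bounds) is the same as the paper's, but both of your bounds have gaps at exactly the points where the paper introduces its key devices. For the upper bound, the logarithmic divergence you flag is genuine, but neither of your remedies is available: $\Omega_*$ is not known to be a ball, so you cannot arrange a vanishing quadrupole moment for the droplet, and no ``near-isotropy'' of near-optimal Jellium configurations is known (the structure of Jellium minimizers is precisely the open crystallization problem). The paper's resolution is different and clean: it averages the trial energy over \emph{independent rotations} $R_z\in SO(3)$ of each droplet. By Newton's theorem the rotation-averaged droplet interacts with all charge outside its circumscribed ball \emph{exactly} like the point charge $m_*\delta_z$, so droplet--droplet and droplet--background terms reproduce the point Jellium energy with no multipole error at all; the only price is an explicit $2\pi r_*^2m_*\rho$ per droplet, giving the $2\pi r_*^2\rho^2$ in \eqref{eq:final_upper_bound}. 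Multipole decay is used only for a buffer layer of neutral, dipole-free cells near the boundary, whose mutual interactions decay like $|x|^{-5}$ and are summable. Without some such device your upper bound does not close.

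The lower bound is where the more serious gaps are. First, your mesoscopic cells have side $\ell\gg\rho^{-1/3}$, so each cell contains a diverging number of droplets and the reduction of extended sets to point charges reappears unchanged inside every cell; no compactness is available at that scale. The paper instead localizes via the Graf--Schenker inequality (Proposition~\ref{prop:Graf--Schenker}) into tetrahedra of side exactly $A\rho^{-1/3}$ with $A$ fixed, takes $\rho\to0$ \emph{first} and $A\to\infty$ only afterwards; at that scale the grand-canonical minimizer of $F$ in \eqref{eq:def_F} has uniformly bounded mass (Lemma~\ref{lem:estim_mass}, a slicing argument), and Theorem~\ref{thm:decomposition}/Corollary~\ref{cor:repulsion_ell_n} show it converges to \emph{finitely many exact minimizers} of $\mu_*$ at mutual distances of order $\ell$ -- this compactness is what legitimizes the point-charge picture, and it is also why no explicit error rate survives. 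Second, your screening step (``correct the charge imbalance by a thin surface layer -- which only increases the energy'') is unjustified: adding compensating charge does not increase the Coulomb energy monotonically, and making cell-wise neutralization rigorous for a lower bound is precisely the hard step you are assuming; Graf--Schenker replaces it by an inequality costing only $C\rho/\ell$ per unit volume, together with an exact localization identity for the perimeter. Third, truncating to ``pieces of volume at most $M$'' cannot produce the constant $m_*^{2/3}$: since $e_{\rm Jel}<0$, point charges of mass up to $M>m_*$ would only yield the weaker bound $M^{2/3}e_{\rm Jel}\rho^{4/3}$, and your claim that large pieces carry a \emph{superlinear} energy excess is unsupported (long thin connected sets have energy merely linear in their volume, and no quantitative stability for $\mu_*$ is known). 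In the paper the masses of the limiting droplets lie in $[m_{**},m_*]$ \emph{because they are exact minimizers}, which also guarantees they have bounded diameter (needed to compare extended and point interactions from below), and the constant $m_*$ then emerges from the Daubechies--Lieb linearity-in-mass argument replacing each mass by $0$ or $m_*$. These three ingredients -- localization at scale $A\rho^{-1/3}$ by Graf--Schenker, the mass bound plus the decomposition theorem, and the linearity argument -- are missing from your proposal and are not routine.
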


The theorem provides an expansion of the liquid drop energy with the different orders in $\rho$ corresponding to different properties of minimizers. The leading term $\mu_*\rho$ describes the fact that minimizers $\Omega$ will consist of infinitely many droplets of size of order one (independent of $\rho$), each droplet minimizing the liquid drop energy~\eqref{eq:def_E} without any background ($\rho=0$) normalized by the volume.  It is conjectured that these droplets are balls, in which case we have $R_*=(\frac{15}{8\pi})^{1/3}$, $\mu_*=9(\frac\pi{15})^{1/3}$ and $m_*=5/2$. Those represent the \emph{``gnocchis''}. An illustration is provided in Figure~\ref{fig:big-box}.

The next term in the expansion~\eqref{eq:low_density_expansion} describes how the gnocchis are arranged in the tomato sauce, corresponding to the low-density uniform background of electrons. The Jellium model (introduced below) tells us that the gnocchis will be placed at large distances, of order $\rho^{-1/3}$. A famous conjecture of Wigner~\cite{Wigner-34} states that they will in fact be regularly arranged on a Body-Centered Cubic lattice, in which case the Jellium energy is $e_{\rm Jel}\simeq -1.4442$. The interpretation of the negativity of $e_{\rm Jel}$ is that there is an effective attraction between the droplets, mediated by the background of electrons.

\begin{figure}[t]
\includegraphics[width=7cm]{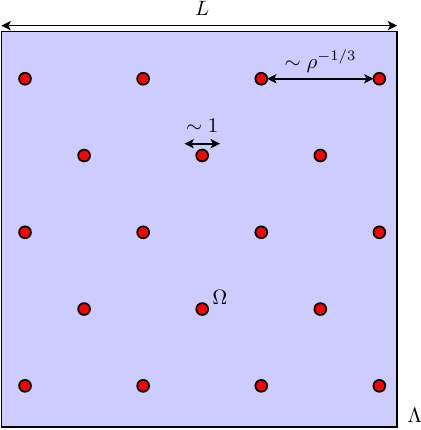}
\caption{Schematic illustration of Theorem~\ref{thm:low_density}. For small background density $\rho$, a minimizer $\Omega$ will consist of many droplets of size one and relative distances of order $\rho^{-1/3}$. \label{fig:big-box}}
\end{figure}

An important difficulty for the proof of Theorem~\ref{thm:low_density} is that we have many conjectures but little rigorous information about $\mu_*$, $m_*$ and $e_{\rm Jel}$. Remarkably, our proof is independent of the validity of these conjectures. It would certainly be easier if they had been solved.

Let us mention some previous works related to Theorem~\ref{thm:low_density}. The first term $\mu_*\rho$ was derived in~\cite{KnuMurNov-16} in a very dilute regime where the limits $L\to\ii$ and $\rho\to0$ are coupled in the manner $\rho\sim 1/L^2$. The leading order in the regime considered here (first $L\to\ii$ and then $\rho\to0$) was derived in~\cite{EmmFraKon-20}. The ultra-dilute regime with coupled limits $\rho\sim 1/L^3$ was considered earlier in~\cite{ChoPel-10} for the periodic model and a two-term expansion in the same spirit as~\eqref{eq:low_density_expansion} was proved. This is a regime where screening effects are much less pronounced than in the situation considered in this paper.

Our proof of~\eqref{eq:low_density_expansion} goes by showing upper and lower bounds on $e(\rho)$. We can replace the $o(\rho^{4/3})$ by an explicit error bound in the upper bound, see~\eqref{eq:final_upper_bound} below, but our proof of the lower bound does not provide any explicit bound on the error. It is an open problem to derive one.

The main argument we use for proving~\eqref{eq:low_density_expansion} is to reduce to smaller sets of size $A\rho^{-1/3}$ with $A$ large but fixed. In any such set we expect finitely many droplets, similar to the regime studied in~\cite{ChoPel-10}. For the upper bound we involve the periodic Jellium problem defined as in~\eqref{eq:thermo_limit_periodic} but for point-like particles. For the lower bound we use the Graf--Schenker inequality as in~\cite{GraSch-95,HaiLewSol_thermo-09} to remove the interactions between the different sets. We then argue by compactness to expand the energy locally in the corresponding sets of size $A\rho^{-1/3}$.

Several phase transitions are believed to occur when the density is increased from $0$ to $1/2$, that is, when going deeper into the crust of the neutron star. Those are often called \emph{``pasta phases''}~\cite{BohWhe-39,RavPetWil-83,HasSekYam-84,Oyamatsu-93}. At small but positive $\rho$ the gnocchis are not believed to be perfect balls as they become in the limit $\rho\to0$. They will be slightly elongated with an asymmetry increasing with $\rho$. When the density is increased beyond a certain value, instead of gnocchis one expects infinitely long cylinders called ``spaghetti'', which are arranged along a two-dimensional lattice. (Just like the gnocchis may not be perfect balls, the base of the spaghetti cylinders may not be perfect disks.) When the density is further increased beyond another critical value (strictly less than $1/2$), instead of spaghettis one expects infinite slabs called ``lasagna'', which are arranged along a one-dimensional lattice. In particular, at $\rho=1/2$ nucleons are expected to occupy slabs that are separated by completely unoccupied slabs of the same thickness. When $\rho$ is further increased, the story goes in the reverse way for the complement $\R^d\setminus \Omega$ until we reach an anti-gnocchi phase with $\Omega$ occupying most of $\R^3$, with holes looking likes balls at distances $(1-\rho)^{-1/3}$ and an energy behaving like for $\rho\to0$, by symmetry. The energies in the spaghetti and lasagna regimes are given by the one- and two-dimensional analogues of the model studied here. The one-dimensional analogue of $e(\rho)$ can be computed explicitly~\cite{RenWei-00,FraLie-19}. The two-dimen\-sio\-nal problem is still open and for partial results we refer to~\cite{CheOsh-07,GolMurSer-13,GolMurSer-14,Emmert-25_PhD}

In the remaining two subsections we define the constants $\mu_*$, $m_*$, $e_{\rm Jel}$ appearing in the statement of Theorem~\ref{thm:low_density}.

\subsection{Isolated droplets}
When we take $\rho\to0$ and replace $\Lambda$ by $\R^3$, we obtain the liquid drop model for a single droplet of nuclear matter without any electronic background. This leads us to introducing
\begin{equation}
\boxed{\mu_*:=\min_{\substack{\Omega\subset\R^3\\ |\Omega|>0}}I[\Omega],\qquad I[\Omega]:=|\Omega|^{-1}\left(\Per(\Omega)+\frac12\iint_{\Omega^2}\frac{\dx\,\dy}{|x-y|}\right).}
\label{eq:def_mu_*}
\end{equation}
The constant $\mu_*$ is the one appearing in the leading term of the low-density expansion~\eqref{eq:low_density_expansion}.

It is known~\cite{KnuMur-13,FraLie-15,KnuMurNov-16} that the minimum in~\eqref{eq:def_mu_*} is attained for some bounded set $\Omega_*$, and that the problem admits no minimizer of a too large volume or too large diameter (but the infimum can be the same because we can put several copies of $\Omega_*$ infinitely far apart). This allows us to introduce the largest possible volume of minimizers:
\begin{equation}
m_*:=\max\left\{|\Omega|\ :\ I[\Omega]=\mu_*\right\}.
\label{eq:def_m_*}
\end{equation}
It is also known that $\{|\Omega|\ :\ I[\Omega]=\mu_*\}$ is a closed bounded set in $(0,\ii)$, so that there is no minimizer of small volume either. For later purposes we introduce the smallest mass of minimizers:
\begin{equation}
m_{**}:=\min\left\{|\Omega|\ :\ I[\Omega]=\mu_*\right\}.
\label{eq:def_m_**}
\end{equation}

It is conjectured that $\Omega_*$ is unique up to symmetries, hence is a ball.
After optimization over the radius this leads to the conjecture that
\begin{equation}
\mu_*\overset{\text{\bf ?}}=9\left(\frac\pi{15}\right)^{\frac13},\qquad m_{**}\overset{\text{\bf ?}}=m_*\overset{\text{\bf ?}}=\frac52.
\label{eq:conjecture_m_*}
\end{equation}
It is rigorously known that $m_*\leq 8$ by~\cite{FraKilNam-16} and $m_{**}\geq 5/2$ by~\cite[Prop.~4.1]{FraLie-15}. As an aside we mention that minimizers of $I[\Omega]$ are known to be balls if we add the constraint that $|\Omega|=m$, with given $m\leq1$~\cite{ChoRuo-25}.

For our proof of Theorem~\ref{thm:low_density}, we will need to know the rather precise behavior of minimizing sequences for the minimization problem $\mu_*$ in~\eqref{eq:def_mu_*}. The following result is an improvement of~\cite{FraLie-15,KnuMurNov-16}.

\begin{theorem}[Decomposition of minimizing sequences]\label{thm:decomposition}
Let $\Omega_n \subset\R^3$ be a sequence of Borel sets such that $|\Omega_n|\to m>0$ and
$$\lim_{n\to\ii}\frac{I[\Omega_n]}{|\Omega_n|}=\mu_*.$$
Then there exist an integer $K$ and Borel sets $\Omega^{(1)}_*,...,\Omega^{(K)}_*\subset\R^3$ with finite positive measure $|\Omega^{(j)}_*|=m^{(j)}_*>0$, sequences $a_n^{(j)}\in\R^3$ with $|a_n^{(j)}-a_n^{(k)}|\to_{n\to\ii}\ii$ for $j\neq k$ and a constant $R>0$ such that the following hold, after extraction of a subsequence:
\begin{itemize}
\item[$(i)$] each $\Omega^{(j)}_*$ is a minimizer for $\mu_*$, that is $I[\Omega^{(j)}_*]=\mu_*|\Omega^{(j)}_*|$, and in particular $m_*^{(j)}\in[m_{**},m_*]$;
\item[$(ii)$] for every $j\in\{1,...,K\}$, the set $(\Omega_n-a_n^{(j)})\cap B(0,R)$ converges strongly to $\Omega_*^{(j)}$, in the sense that the corresponding characteristic functions converge in $L^1$;
\item[$(iii)$] the volume and perimeter of the remaining part are negligible:
\begin{equation}
\left|\Omega_n\setminus \bigcup_{j=1}^KB(a_n^{(j)},R)\right|\to0,\qquad \Per\bigg(\Omega_n\setminus \bigcup_{j=1}^KB(a_n^{(j)},R)\bigg)\to0
\label{eq:outside_volume}
\end{equation}
and hence $m=\sum_{j=1}^K m_*^{(j)}$;
\item[$(iv)$] if $K\geq2$ we have
\begin{equation}
I[\Omega_n]\geq \mu_*|\Omega_n|+\sum_{1\leq j<k\leq K}\frac{m_*^{(j)}m_*^{(k)}+o(1)}{|a_n^{(j)}-a_n^{(k)}|}.
\label{eq:repulsion}
\end{equation}
\end{itemize}
\end{theorem}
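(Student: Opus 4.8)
\emph{Strategy.} The proof is a concentration--compactness (``bubble decomposition'') argument for sets of finite perimeter competing with the repulsive Coulomb term, in the spirit of \cite{FraLie-15,KnuMurNov-16} but keeping track of the mutual repulsion needed for $(iv)$. Write $D(A,B):=\iint_{A\times B}|x-y|^{-1}\dx\,\dy$ and $D(A):=D(A,A)$. By hypothesis $\Per(\Omega_n)+\tfrac12 D(\Omega_n)=\mu_*|\Omega_n|+o(|\Omega_n|)\to\mu_* m$, and since $D\geq0$ both $\Per(\Omega_n)$ and $D(\Omega_n)$ are uniformly bounded. I would first rule out vanishing of the concentration function: if $\eta_n:=\sup_{a}|\Omega_n\cap B(a,1)|\to0$, covering $\R^3$ by unit balls $B_i$ of bounded overlap and using the relative isoperimetric inequality on each ball (valid because the volume there is eventually the minority phase) gives $\sum_i|\Omega_n\cap B_i|^{2/3}\leq C\Per(\Omega_n)\leq C'$, contradicting $\sum_i|\Omega_n\cap B_i|^{2/3}\geq\eta_n^{-1/3}\sum_i|\Omega_n\cap B_i|\geq\eta_n^{-1/3}|\Omega_n|\to\infty$. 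So, along a subsequence, $|\Omega_n\cap B(a_n^{(1)},1)|\geq\delta>0$ for some $a_n^{(1)}\in\R^3$.

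\emph{Extracting one bubble.} Translating, $\tilde\Omega_n:=\Omega_n-a_n^{(1)}$ has bounded perimeter, so along a subsequence $\1_{\tilde\Omega_n}\to\1_{\Omega_*^{(1)}}$ in $L^1_{\rm loc}(\R^3)$ with $|\Omega_*^{(1)}\cap B(0,1)|\geq\delta$ and $\Per(\Omega_*^{(1)})<\infty$. With a slowly divergent ``good radius'' $r_n$ (averaging so that $\tilde\Omega_n$ carries $o(1)$ surface measure on $\partial B(0,r_n)$) one gets $\Per(\tilde\Omega_n)\geq\Per(\tilde\Omega_n\cap B(0,r_n))+\Per(\tilde\Omega_n\setminus B(0,r_n))-o(1)$, while $D(\tilde\Omega_n)\geq D(\tilde\Omega_n\cap B(0,r_n))+D(\tilde\Omega_n\setminus B(0,r_n))$ since the cross term is $\geq0$. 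As $\Per+\tfrac12 D$ of each piece is at least $\mu_*$ times its volume, comparison with the near-optimal total energy forces all these inequalities to be asymptotic equalities: the inner piece is tight and converges to $\Omega_*^{(1)}$ in $L^1(\R^3)$ (no mass escapes to intermediate scales); the Coulomb term converges by Hardy--Littlewood--Sobolev, so lower semicontinuity of the perimeter becomes an equality, $\Per(\tilde\Omega_n\cap B(0,r_n))\to\Per(\Omega_*^{(1)})$; and $\Per(\Omega_*^{(1)})+\tfrac12 D(\Omega_*^{(1)})=\mu_*|\Omega_*^{(1)}|$, i.e.\ $\Omega_*^{(1)}$ is a minimizer for $\mu_*$, whence $|\Omega_*^{(1)}|\in[m_{**},m_*]$. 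Since minimizers of $\mu_*$ are bounded, redoing the cut with a \emph{fixed} $R$ exceeding their diameter gives $(i)$ for $j=1$ and $(ii)$.

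\emph{Iteration, assembly, repulsion.} The remainder $\Omega_n':=\Omega_n\setminus B(a_n^{(1)},r_n)$ again satisfies $|\Omega_n'|\to m-|\Omega_*^{(1)}|$ and $\Per(\Omega_n')+\tfrac12 D(\Omega_n')=\mu_*|\Omega_n'|+o(1)$, so the previous steps apply and produce $a_n^{(2)}$ with $|a_n^{(2)}-a_n^{(1)}|\geq r_n-1\to\infty$, a new minimizing bubble $\Omega_*^{(2)}$, and so on; since every extracted bubble carries mass $\geq m_{**}>0$ the process stops after $K\leq m/m_{**}$ steps, with a remainder of volume $\to0$ — hence, by the Riesz rearrangement inequality, Coulomb self-energy $\to0$, and then by the energy balance perimeter $\to0$ — giving $(iii)$ and $m=\sum_j|\Omega_*^{(j)}|$. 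Collecting the splittings, and using that $\Per(\Omega_n\cap B(a_n^{(j)},R))+\tfrac12 D(\Omega_n\cap B(a_n^{(j)},R))\to\mu_* m_*^{(j)}$, one obtains $\Per(\Omega_n)+\tfrac12 D(\Omega_n)\geq\mu_*|\Omega_n|+\sum_{1\leq j<k\leq K}D\big(\Omega_n\cap B(a_n^{(j)},R),\,\Omega_n\cap B(a_n^{(k)},R)\big)-o(1)$, where the nonnegative Coulomb terms coupling the remainder to the bubbles were dropped and the good slicing was used for the perimeter. Since each $\Omega_n\cap B(a_n^{(j)},R)$ lives in a ball of radius $R=o(|a_n^{(j)}-a_n^{(k)}|)$, Newton's theorem gives
\begin{equation*}
D\big(\Omega_n\cap B(a_n^{(j)},R),\,\Omega_n\cap B(a_n^{(k)},R)\big)\ \geq\ \frac{|\Omega_n\cap B(a_n^{(j)},R)|\,|\Omega_n\cap B(a_n^{(k)},R)|}{|a_n^{(j)}-a_n^{(k)}|+2R}\ =\ \frac{m_*^{(j)}m_*^{(k)}+o(1)}{|a_n^{(j)}-a_n^{(k)}|},
\end{equation*}
and inserting this yields $(iv)$.

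\emph{Main obstacle.} The delicate part is the extraction step: showing the bubble is \emph{tight} (so that $(ii)$--$(iii)$ hold with a fixed radius $R$ and with the exact identity $m=\sum_j m_*^{(j)}$) and that no perimeter is lost in the limit or along the slicing spheres. This requires running the good-radius cut, the lower semicontinuity of the perimeter, the Hardy--Littlewood--Sobolev continuity of the Coulomb term, and the near-optimality of the total energy simultaneously, and then bootstrapping; keeping this bookkeeping coherent through the induction — in particular choosing the slowly divergent cut radii so that the energy splittings, the tightness and the $L^1$-convergence all hold at once while the removed balls remain pairwise disjoint — is where the real work lies. The passage from $L^1_{\rm loc}$-convergence to convergence on a fixed ball in $(ii)$ rests crucially on the known boundedness of minimizers of $\mu_*$.
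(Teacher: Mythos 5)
Your overall strategy is the same as the paper's (extract bubbles by translation and local compactness, cut along good spheres chosen by averaging, iterate on the remainder at most $m/m_{**}$ times, pass to a fixed radius $R$ via boundedness of minimizers, bound the interaction from below by Newton's theorem), but the derivation of $(iv)$ does not work as written. You arrive at an intermediate inequality of the form $I[\Omega_n]\ge \mu_*|\Omega_n|+\sum_{j<k}\big(\text{interaction of }\Omega_n\cap B(a_n^{(j)},R)\text{ and }\Omega_n\cap B(a_n^{(k)},R)\big)-o(1)$ with an \emph{additive} $o(1)$ error, coming from the sphere-slicing at the fixed radius $R$ and from invoking the convergence of the bubble energies. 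Since all the distances $|a_n^{(j)}-a_n^{(k)}|$ diverge, the whole interaction sum is itself $o(1)$, so such a bound is no stronger than the trivial $I[\Omega_n]\ge\mu_*|\Omega_n|-o(1)$ (recall $I[\Omega]\ge\mu_*|\Omega|$ for every $\Omega$ by the definition of $\mu_*$), and it cannot produce the $o(1)$ \emph{inside the numerator} in \eqref{eq:repulsion}. This is exactly why the paper has a separate Step 3: it re-localizes around each $a_n^{(j)}$ at radii $r_n^{(j)}\in[\delta_n^{(j)}/6,\delta_n^{(j)}/3]$ with $\delta_n^{(j)}=\min_{k\neq j}|a_n^{(k)}-a_n^{(j)}|$, so that the slicing error is at most $6\eps'_n/\delta_n^{(j)}$ (with $\eps'_n$ the mass of $\Omega_n$ outside the fixed balls), i.e.\ of size $o(1)\sum_{k\neq j}|a_n^{(k)}-a_n^{(j)}|^{-1}$, and it uses only the trivial bound $I\ge\mu_*|\cdot|$ on the localized pieces rather than convergence of their energies; the interaction is then still bounded below through the sub-pieces $\Omega_n\cap B(a_n^{(j)},R)$ and Newton's theorem, and every error can be absorbed into the numerators. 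Without this second, rescaled localization your $(iv)$ is vacuous.

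A second, less central issue is the extraction step: you assert that near-optimality of the total energy ``forces'' tightness of the inner piece, i.e.\ that no mass sits at intermediate scales. That implication is false: two minimizers at mutual distance $d_n\to\infty$ with $d_n\ll r_n$ make every inequality in your chain an asymptotic equality, yet the inner piece is not tight around one bubble. Tightness is not an energetic fact but a consequence of choosing the cut radius correctly: as in the paper, one uses the monotone bounded concentration (Levy) function $r\mapsto|\Omega_n\cap B(a_n^{(1)},r)|$ to select $R_n\to\infty$ slowly, so that $|\Omega_n\cap B(a_n^{(1)},R_n)|\to|\Omega_*^{(1)}|$ and the annulus $\{R_n\le|x-a_n^{(1)}|<2R_n\}$ carries negligible mass; any intermediate mass is thereby pushed into the outer remainder and recovered as a further bubble in the iteration. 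Your ``main obstacle'' paragraph identifies the right difficulty, but the mechanism you propose to resolve it is circular; with these two repairs (Levy-function choice of $R_n$, and the $\delta_n^{(j)}$-scale re-cut for $(iv)$) your argument coincides with the paper's proof.
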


The theorem gives the behavior of an arbitrary minimizing sequence for the liquid drop problem~\eqref{eq:def_mu_*}. The statement is that an almost optimal set $\Omega_n$ has to look like the union of finitely many minimizers placed at a large distance from each other, up to a negligible set (in terms of volume and perimeter). In the case that there are several pieces, the theorem gives in addition a lower bound on the interaction between them, in terms of the relative distances.

We first give an easy consequence of this theorem in the case that the minimizing sequence $\Omega_n$ is known to have diameter $\ell_n\to\ii$ and an energy that is correct up to a $O(1/\ell_n)$. This is the version that will be used later in our proof of the lower bound in Theorem~\ref{thm:low_density}.

\begin{corollary}\label{cor:repulsion_ell_n}
Let $A\subset\R^3$ be a fixed bounded open set. Let $\Omega_n \subset\R^3$ be a sequence of Borel sets such that $\Omega_n\subset \ell_n A$, $|\Omega_n|\to m>0$ and
$$I[\Omega_n]\leq |\Omega_n|\mu_*+\frac{C}{\ell_n},$$
for some $\ell_n\to\ii$. Then, after extraction of a further subsequence, the centers from Theorem~\ref{thm:decomposition} satisfy $\ell_n^{-1}a_n^{(j)}\to y^{(j)}$ for $K$ distinct points $y^{(1)},...,y^{(K)}$ in $\overline{A}$. When $K\geq2$ we have
\begin{equation}
I[\Omega_n]\geq \mu_*|\Omega_n|+\frac1{\ell_n}\sum_{1\leq j<k\leq K}\frac{m_*^{(j)}m_*^{(k)}}{|y^{(j)}-y^{(k)}|}+o\left(\frac1{\ell_n}\right).
\label{eq:repulsion2}
\end{equation}
\end{corollary}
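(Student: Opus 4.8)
The plan is to deduce Corollary~\ref{cor:repulsion_ell_n} from Theorem~\ref{thm:decomposition} by checking the hypotheses of the latter and then rescaling the interaction estimate~\eqref{eq:repulsion}. First I would observe that the assumption $I[\Omega_n]\leq|\Omega_n|\mu_*+C/\ell_n$ with $\ell_n\to\infty$ and $|\Omega_n|\to m>0$ implies $I[\Omega_n]/|\Omega_n|\to\mu_*$, so Theorem~\ref{thm:decomposition} applies: after passing to a subsequence we obtain an integer $K$, minimizers $\Omega_*^{(1)},\dots,\Omega_*^{(K)}$ with masses $m_*^{(j)}\in[m_{**},m_*]$, translation vectors $a_n^{(j)}$ with $|a_n^{(j)}-a_n^{(k)}|\to\infty$ for $j\neq k$, and a radius $R>0$ satisfying $(i)$--$(iv)$. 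In particular $m=\sum_j m_*^{(j)}$ by $(iii)$.

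Next I would control the location of the centers. By $(ii)$, the set $(\Omega_n-a_n^{(j)})\cap B(0,R)$ converges in $L^1$ to $\Omega_*^{(j)}$, which has positive measure; hence for $n$ large there is a point of $\Omega_n$ within distance $R$ of $a_n^{(j)}$. Since $\Omega_n\subset\ell_n A$, this forces $a_n^{(j)}\in\ell_n A + B(0,R)$, i.e. $\ell_n^{-1}a_n^{(j)}$ lies within distance $R/\ell_n$ of $A$. The sequence $(\ell_n^{-1}a_n^{(j)})_n$ is therefore bounded (as $A$ is bounded), so after a further extraction $\ell_n^{-1}a_n^{(j)}\to y^{(j)}\in\overline{A}$ for each $j$. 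The points $y^{(j)}$ are pairwise distinct: if $y^{(j)}=y^{(k)}$ for $j\neq k$ then $|a_n^{(j)}-a_n^{(k)}| = o(\ell_n)$, which is compatible with $|a_n^{(j)}-a_n^{(k)}|\to\infty$, so this needs the finer information. Here is the subtle point I would have to address: distinctness of the $y^{(j)}$ is \emph{not} automatic from Theorem~\ref{thm:decomposition} alone, since two centers could diverge from each other like $\sqrt{\ell_n}$. I would resolve this using~\eqref{eq:repulsion}: if $|a_n^{(j)}-a_n^{(k)}|=o(\ell_n)$ for some pair, then the corresponding term $(m_*^{(j)}m_*^{(k)}+o(1))/|a_n^{(j)}-a_n^{(k)}|$ in~\eqref{eq:repulsion} is of order strictly larger than $1/\ell_n$, contradicting the hypothesis $I[\Omega_n]\leq|\Omega_n|\mu_*+C/\ell_n$ (all other terms in the sum being nonnegative). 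Hence every pair of centers satisfies $|a_n^{(j)}-a_n^{(k)}|\geq c\,\ell_n$ for some $c>0$ and $n$ large, which both proves $y^{(j)}\neq y^{(k)}$ and shows the limits $y^{(j)}$ are well separated.

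Finally, to get~\eqref{eq:repulsion2} I would rewrite each term of the sum in~\eqref{eq:repulsion} using $|a_n^{(j)}-a_n^{(k)}| = \ell_n\,|\ell_n^{-1}a_n^{(j)}-\ell_n^{-1}a_n^{(k)}|$. Since $\ell_n^{-1}a_n^{(j)}\to y^{(j)}$ and $\ell_n^{-1}a_n^{(k)}\to y^{(k)}$ with $y^{(j)}\neq y^{(k)}$, we have
\begin{equation}
\frac{m_*^{(j)}m_*^{(k)}+o(1)}{|a_n^{(j)}-a_n^{(k)}|} = \frac{1}{\ell_n}\cdot\frac{m_*^{(j)}m_*^{(k)}+o(1)}{|\ell_n^{-1}a_n^{(j)}-\ell_n^{-1}a_n^{(k)}|} = \frac1{\ell_n}\left(\frac{m_*^{(j)}m_*^{(k)}}{|y^{(j)}-y^{(k)}|}+o(1)\right),
\end{equation}
using that $|\ell_n^{-1}a_n^{(j)}-\ell_n^{-1}a_n^{(k)}|\to|y^{(j)}-y^{(k)}|>0$, so the reciprocal converges. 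Summing over the finitely many pairs $1\leq j<k\leq K$ converts the $o(1)$ terms into a single $o(1)$ and yields~\eqref{eq:repulsion2}. The main obstacle, as indicated, is the separation of the rescaled centers; once that is established via~\eqref{eq:repulsion}, the rest is a routine rescaling and passage to the limit.
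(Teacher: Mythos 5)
Your proposal is correct and follows essentially the same route as the paper: apply Theorem~\ref{thm:decomposition}, use $\Omega_n\subset\ell_n A$ to get convergence of the rescaled centers to points of $\overline{A}$ after extraction, and combine~\eqref{eq:repulsion} with the hypothesis $I[\Omega_n]\leq\mu_*|\Omega_n|+C/\ell_n$ to force $|a_n^{(j)}-a_n^{(k)}|\geq c\,\ell_n$, hence distinctness, before rescaling~\eqref{eq:repulsion} into~\eqref{eq:repulsion2}. You merely spell out the boundedness of $\ell_n^{-1}a_n^{(j)}$ and the final passage to the limit, which the paper leaves implicit.
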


The corollary says that if there are several droplets they have to be as far as possible, that is, the distances are all proportional to the size of the piece under consideration. This is of course due to the Coulomb repulsion.

Since $m_*^{(j)}\geq m_{**}$, it follows from this corollary that if $C$ is small enough, then there can be only one piece ($K=1$).

\begin{proof}[Proof of Corollary~\ref{cor:repulsion_ell_n}]
We apply Theorem~\ref{thm:decomposition} to the sequence $\Omega_n$. Since $\Omega_n$ is included in $\ell_n A$, after extraction of a subsequence we can assume $\ell_n^{-1}a_n^{(j)}\to y^{(j)}\in \overline{A}$. If $K\geq2$, the lower bound~\eqref{eq:repulsion} and the assumed upper bound on $I[\Omega_n]$ imply
$$\frac{|a_n^{(j)}-a_n^{(k)}|}{\ell_n}\geq \frac{C}{2m_*^{(j)}m_*^{(k)}}$$
and hence the $y^{(j)}$'s are distinct.
\end{proof}

\subsection{Jellium model}
We conclude this section by presenting the \emph{Jellium model} describing classical charged particles moving in a uniform background. This is the model we will see when ``zooming out'' at scale $\rho^{-1/3}$, where the density of electrons is equal to 1 and the nuclear droplets are very peaked and behave like Dirac delta's.

Jellium is often associated with the name of Wigner~\cite{Wigner-34} who however introduced the quantum version in 1934. The classical model is also called the \emph{one-component plasma} and it seems to have been proposed around 1900 by J.~J.~Thomson~\cite{Thomson-04} -- based on previous ideas of W.~Thomson (Lord Kelvin) -- in order to describe the electrons in an atom, before the discovery of the nuclei. In this context, it is sometimes called the \emph{plum pudding model}.

In the thermodynamic limit, the energy per unit volume is defined similarly as~\eqref{eq:thermo_limit} by
\begin{multline}
e_\text{Jel}:=\lim_{N\to\ii}N^{-1}\min_{x_1,...,x_N\in N^{\frac13}\Lambda_1}\bigg\{\sum_{1\leq j<k\leq N}\frac{1}{|x_j-x_k|}\\
-\sum_{j=1}^N\int_{N^{\frac13}\Lambda_1}\frac{\dy}{|x_j-y|}+\frac12\iint_{(N^{\frac13}\Lambda_1)^2}\frac{\dx\,\dy}{|x-y|}\bigg\},
\label{eq:Jellium}
\end{multline}
with $\Lambda_1$ any fixed smooth enough open set of volume $|\Lambda_1|=1$, for instance a ball or a cube. The $x_j$ are the locations of the classical particles.

The existence of the limit~\eqref{eq:Jellium} was proved first by Lieb and Narnhofer in 1975~\cite{LieNar-75}, based on a method introduced earlier by Lieb and Lebowitz in~\cite{LieLeb-72}. This model has recently reappeared in many works mentioned in the review paper~\cite{Lewin-22}. In particular, in a long series of works culminating in~\cite{Serfaty-24_lecture_notes} Serfaty and co-workers have obtained detailed results about the Jellium model with variable background density. As a byproduct, also results about the original Jellium model were obtained.

It is known that
\begin{equation}
-1.4508\approx-\frac35\left(\frac{9\pi}{2}\right)^{\frac13}\leq e_{\text{Jel}}\leq \zeta_\text{BCC}(1)\simeq-1.4442
\label{eq:bounds_e_Jellium}
\end{equation}
and conjectured that the upper bound is in fact an equality~\cite{Wigner-34,BlaLew-15,Lewin-22}. The lower bound is proved in~\cite[Appendix]{LieNar-75}. The right side of~\eqref{eq:bounds_e_Jellium} is the Epstein zeta function of the Body-Centered Cubic lattice~$\cL$ of unit density, containing the origin, given by the expression
$$\zeta_{\cL}(s)=\frac12\sum_{\ell\in\cL\setminus\{0\}}\frac1{|\ell|^s}.$$
This is a convergent series for $\Re(s)>3$ which admits a meromorphic extension to the whole complex plane $\C$ with a unique pole at $s=3$. The Jellium value is then given by this extension at $s=1$~\cite[Lemma~32]{Lewin-22}.

\subsection*{Organisation of the paper}
The rest of the paper is devoted to the proofs of Theorems~\ref{thm:thermo_limit},~\ref{thm:low_density} and~\ref{thm:decomposition}. We start in Section~\ref{sec:proof_decomposition} with the proof of Theorem~\ref{thm:decomposition}. Sections~\ref{sec:proof_upper_bound} and~\ref{sec:proof_lower_bound} are devoted to the upper and lower bound in Theorem~\ref{thm:low_density}, respectively. In Appendix~\ref{app:proof_thermo-limit} we describe how one can prove the existence of the thermodynamic limit in Theorem~\ref{thm:thermo_limit} following ideas of Lieb, Lebowitz and Narnhofer. Throughout the proofs we use the notation
$$D(f,g):=\frac12\iint_{\R^6}\frac{f(x)g(y)}{|x-y|}\dx\,\dy,\qquad D(f):=D(f,f)$$
for the Coulomb energy.

\subsection*{Acknowledgment}
We thank Christian Hainzl for useful discussions regarding the grand-canonical liquid drop model and the use of the Graf--Schenker inequality. Partial support through US National Science Foundation grant DMS-1954995 (R.L.F.), as well as through the German Research Foundation through EXC-2111-390814868 (R.L.F.) and TRR 352-Project-ID
470903074 (R.L.F.~and R.S.) is acknowledged. Part of this work was completed when R.L.F.~was visiting Ceremade on a Chaire d'Excellence of the Fondation Sciences Mathématiques de Paris (FSMP).

\section{Proof of Theorem~\ref{thm:decomposition} on minimizing sequences for the liquid drop model}\label{sec:proof_decomposition}
In this section we provide the proof of Theorem~\ref{thm:decomposition}. The idea is to iterate several times the proof of~\cite{FraLie-15} so as to optimize the error terms.

\subsection*{Step 1. Extracting bubbles}
The volume $|\Omega_n|$ stays bounded and does not tend to 0 by assumption. Since the energy $I[\Omega_n]$ converges we conclude that $\Per(\Omega_n)$ is bounded. By~\cite[Prop.~2.1]{FraLie-15} we can find a center $a_n^{(1)}$ so that (after extraction of a subsequence) $\Omega_n-a_n^{(1)}\to \Omega_*^{(1)}\neq\emptyset$ locally with $0<m_*^{(1)}:=|\Omega_*^{(1)}|\leq m$ and $\Per(\Omega_*^{(1)})\leq \liminf_{n\to\ii}\Per(\Omega_n)$. Here local convergence of sets means $L^1_{\rm loc}$ convergence of the corresponding characteristic functions.

Next we consider the Levy function $M_n(r):=|\Omega_n\cap B(a_n^{(1)},r)|$ which is non-decreasing in $r$, uniformly bounded, and converges for every fixed $r$ to $M_\ii(r):=|\Omega_*^{(1)}\cap B(0,r)|$. We can thus find a sequence $R_n\to\ii$ such that
$$|\Omega_n\cap B(a_n^{(1)},R_n)| \to m_*^{(1)},\qquad \big|\Omega_n\cap \{R_n\leq |x-a_n^{(1)}|<2R_n\}\big|=:\eps_n \to0. $$
The latter implies that we can find a radius $R_n'\in[R_n,2R_n)$ such that
\begin{equation}
\cH^2\big(\Omega_n\cap \partial B(a_n^{(1)},R_n')\big)\leq \frac{\eps_n}{R_n},
\label{eq:measure_boundary}
\end{equation}
where $\cH^2$ denotes the two-dimensional Hausdorff measure. Writing $\Omega_n=\Omega_n^i\cup \Omega_n^o$ (disjoint union) with $\Omega_n^i:=\Omega_n\cap B(a_n^{(1)},R_n')$ we obtain
\begin{align}
&I[\Omega_n]-\mu_*|\Omega_n|\nn\\
&\qquad =I[\Omega_n^i]-\mu_*|\Omega_n^i|+I[\Omega_n^o]-\mu_*|\Omega_n^o|-2\cH^2\big(\Omega_n\cap \partial B(a_n^{(1)},R_n')\big)\nn\\
&\qquad\qquad+2D(\1_{\Omega_n^i},\1_{\Omega_n^o})\nn\\
&\qquad \geq I[\Omega_n^i]-\mu_*|\Omega_n^i|+I[\Omega_n^o]-\mu_*|\Omega_n^o|-2\frac{\eps_n}{R_n}+2D(\1_{\Omega_n^i},\1_{\Omega_n^o})\nn\\
&\qquad \geq I[\Omega_n^i]-\mu_*|\Omega_n^i|-2\frac{\eps_n}{R_n}.\label{eq:repulsion_pre}
\end{align}
To get the final lower bound we used that
$$I[\Omega_n^o]-\mu_*|\Omega_n^o|\geq0,\qquad D(\1_{\Omega_n^i},\1_{\Omega_n^o})\geq0.$$
There is a small subtlety in the first equality in~\eqref{eq:repulsion_pre}. From~\cite[Lem.~15.12]{Maggi-12}, the equality holds almost everywhere in $R_n'$ and therefore we need to use that~\eqref{eq:measure_boundary} in fact holds for a positive measure set of $R_n'\in[R_n,2R_n)$.

Since the left side of~\eqref{eq:repulsion_pre} tends to 0 by assumption, we can use the lower semi-continuity of $I$ proved in~\cite{FraLie-15} and we arrive at
$$I[\Omega_*^{(1)}]\leq \mu_*|\Omega_*^{(1)}|.$$
Recall that $|\Omega_n^i|\to |\Omega_*^{(1)}|>0$ by construction of $R_n$.
This shows that $\Omega_*^{(1)}$ is a minimizer. In particular we have $m_*^{(1)}\geq m_{**}>0$ since those have a minimal mass. Going back the the above estimates we see that
$$I[\Omega_n^o]-\mu_*|\Omega_n^o|\to 0.$$
Hence we have two possibilities. If the volume $|\Omega_n^o|\to0$ we can stop and let $K:=1$. The fact that $I[\Omega_n^o]\to0$ implies in addition that $\Per(\Omega_n^o)\to0$. If $|\Omega_n^o|\nrightarrow 0$, we apply all the previous arguments to $\Omega_n^o$ and find a new set $\Omega_*^{(2)}$. We proceed this way by induction. We know that the procedure will terminate after a finite number of steps $K\leq m/m_{**}$. We thus obtain our $K$ minimizers $\Omega_*^{(j)}$.

\subsection*{Step 2. Cutting at a finite distance.} So far we have found radii $R_n^{(j)}\to\ii$ and centers $a_n^{(j)}$ so that
$$\big(\Omega_n-a_n^{(j)}\big)\cap B(0,R_n^{(j)})\to\Omega_*^{(j)}\quad\text{globally}$$
in the sense that the characteristic functions converge in $L^1(\R^3)$ and
$$\left|\Omega_n\setminus \bigcup_{j=1}^K B(a_n^{(j)},R_n^{(j)})\right|\to0.$$
In the statement $(ii)$ of the theorem we had a fixed cutoff radius $R$ instead of divergent sequences. Let us quickly explain how to do this.

We know that minimizers for $\mu_*$ have a finite diameter~\cite[Lemma~7.2]{KnuMur-13}. We can thus find an $R>0$ such that $\Omega_*^{(j)}\subset B(0,R)$ for all $j=1,...,K$. From the $L^1$ convergence of the characteristic functions this implies
$$\left|(\Omega_n-a_n^{(j)}\big)\cap \{R\leq|x|\leq R_n^{(j)}\}\right|\to0$$
and therefore we obtain as we wanted
\begin{equation}
\eps'_n:=\left|\Omega_n\setminus \bigcup_{j=1}^KB(a_n^{(j)},R)\right|\to0.
\label{eq:remains_small_vol}
\end{equation}
Arguing exactly as in Step 1 for the energy, we can cut at a common distance $R\leq R'\leq 2R$ around each point $a_n^{(j)}$ and find again $I[\Omega_n^o]\to0$, which implies $\Per(\Omega_n^o)\to0$. We have thus proved~\eqref{eq:outside_volume}. This concludes the proof of Theorem~\ref{thm:decomposition} when $K=1$.

\subsection*{Step 3. Computation of the interaction for $K\geq2$.}
Next we modify our localization procedure once more to evaluate more precisely the interactions between the $\Omega_*^{(j)}$'s in the case $K\geq2$.

For each $j$ we call $\delta_n^{(j)}:=\min_{k\neq j}|a_n^{(k)}-a_n^{(j)}|\to\ii$ the distance to the nearest bubble and we localize with respect to this distance. By the same argument as in the previous step, we can find new radii $\delta_n^{(j)}/6\leq r_n^{(j)}\leq \delta_n^{(j)}/3$ such that
$$\cH^2\big(\Omega_n\cap\partial B(a_n^{(j)},r_n^{(j)})\big)\leq\frac{6\eps_n'}{\delta_n^{(j)}}. $$
We then define the new sets
$$\Omega_n^{(j)}:=\Omega_n\cap B(a_n^{(j)},r_n^{(j)}),\qquad \Omega_n^o:=\Omega_n\setminus \bigcup_{j=1}^K\Omega_n^{(j)}.$$
We decompose the energy as above and obtain
\begin{align}
I[\Omega_n]-\mu_*|\Omega_n|\nn
&\geq I[\Omega_n^o]-\mu_*|\Omega_n^o| -6\eps'_n\sum_{j=1}^K\frac{1}{\delta_n^{(j)}} +2\sum_{j\neq k} D(\1_{\Omega_n^{(j)}},\1_{\Omega_n^{(k)}})\nn\\
&\geq \sum_{j\neq k} \frac{m_*^{(j)}m_*^{(k)}+o(1)}{|a_n^{(j)}-a_n^{(k)}|}.\label{eq:repulsion_end_of_proof}
\end{align}
In the last line we have used that $I[\Omega_n^o]\geq \mu_*|\Omega_n^o|$ and we bounded the interaction from below by
\begin{align*}
2D(\1_{\Omega_n^{(j)}},\1_{\Omega_n^{(k)}})&\geq \int_{\Omega_n\cap B(a_n^{(j)},R)}\int_{\Omega_n\cap B(a_n^{(k)},R)}\frac{\dx\,\dy}{|x-y|}\\
&\geq \frac{|\Omega_n\cap B(a_n^{(j)},R)||\Omega_n\cap B(a_n^{(k)},R)|}{|a_n^{(j)}-a_n^{(k)}|+2R}\\
&= \frac{m_*^{(j)}m_*^{(k)}+o(1)}{|a_n^{(j)}-a_n^{(k)}|}.
\end{align*}
We also used that the localization error can be bounded by
\begin{equation}
\frac{6\eps_n}{\delta_n^{(j)}}\leq6\eps_n\sum_{k\neq j}\frac{1}{|a_n^{(k)}-a_n^{(j)}|}.
\label{eq:control_loc_error}
\end{equation}
The bound in~\eqref{eq:repulsion_end_of_proof} is our claimed estimate.
This concludes the proof of Theorem~\ref{thm:decomposition}.
\qed

\begin{remark}[Riesz case]
\rm All the arguments in the proof of Theorem~\ref{thm:decomposition} apply if the Coulomb potential $1/|x|$ is replaced by a Riesz potential $1/|x|^s$ with $0<s<d$ in any dimension $d\geq1$, except for~\eqref{eq:repulsion}. The latter requires $0<s\leq 1$ so that the localization error of the perimeter can be controlled as in~\eqref{eq:control_loc_error}. Moreover, the finiteness of $m_*$ is only known for $s\leq2$~\cite{FraNam-21}.
\end{remark}

\section{Proof of the upper bound in Theorem~\ref{thm:low_density}}\label{sec:proof_upper_bound}
For $N$ a large but fixed integer, we consider the box $C_\ell=(-\ell/2,\ell/2)^3$ of volume $|C_\ell|=m_*N/ \rho$ where we recall that $m_*$ was defined in~\eqref{eq:def_m_*}. In other words we take
$$\ell=\left(\frac{m_*N}\rho\right)^{1/3}.$$
In this box, we place $N$ points $X=\{x_1,...,x_N\}$ optimizing the \emph{periodic Jellium problem}
$$E_{\rm per}(\ell):=\min_{x_1,...,x_N\in C_\ell}\sum_{1\leq j<k\leq N} G_\ell(x_j-x_k).$$
Recall that $G_\ell$ is $\ell\Z^3$ periodic with the Fourier coefficients $\widehat{G_\ell}(k)=\sqrt{2/\pi}|k|^{-2}$ for $k\in(2\pi/\ell)\Z^3$ and $\widehat{G_\ell}(0)=0$. We have $G_\ell(x)=G_1(x/\ell)/\ell$. Due to the periodicity of $G_\ell$ we know that $X+\tau$ modulo $\ell\Z^3$ is a minimizer for all $\tau\in\R^3$. Hence we can without loss of generality assume that
$$\sum_{j=1}^Nx_j=0.$$
An argument of Lieb shows that $|x_j-x_k|\geq c(m_*/\rho)^{1/3}$ for some constant $c>0$ independent of $N$ and all $j\neq k$ (see~\cite[Lemma~25]{Lewin-22}). It will be useful to consider in addition a different configuration of points $Y=\{y_1,...,y_N\}\subset C_\ell$ which also satisfies $\sum_{j=1}^Ny_j=0$ and the additional condition that $\rd(y_j,\partial C_\ell)\geq c(m_*/\rho)^{1/3}$, that is, staying far away from the boundary. For concreteness we can place them on a lattice of side length $\eps(m_*/\rho)^{1/3}$ for some $\eps>0$ small enough. We can do this in such a way that $|y_j-y_k|\geq c(m_*/\rho)^{1/3}$ for all $j\neq k$.

We consider the points
$$Z:=\bigcup_{\substack{\xi\in\Z^3\\|\xi|_\ii\leq k}}(X+\ell \xi)\cup \bigcup_{\substack{\xi'\in\Z^3\\|\xi'|_\ii= k+1}}(Y+\ell \xi').$$
In words we place $(2k+1)^3$ copies of $X$ in a periodic fashion over a large cube of side length $(2k+1)\ell$. We also add a layer of cubes around the big cube where we instead use the $y_j$'s. The full system is contained in the larger cube
$$\Lambda:=C_{(2k+3)\ell}$$
of side length $(2k+3)\ell$ and the total number of points is $K:=\#Z=(2k+3)^3N$. By construction the points of $Z$ are at distance $\geq c(m_*/\rho)^{1/3}$ from each other and from the boundary of $\Lambda$. We used the external layer to ensure that the points are all strictly inside $\Lambda$.

Now we give ourselves an optimal set $\Omega_*\subset B(0,r_*)$ for $\mu_*$ in~\eqref{eq:def_mu_*}, of volume $|\Omega_*|=m_*$, and consider the trial set
$$\Omega:=\bigcup_{z\in Z}(z+R_z\Omega_*).$$
That is, at each site $z\in Z$ we place a rotated copy of $\Omega_*$ with a rotation $R_z\in SO(3)$. For $\rho$ small enough (that is, $\ell$ large enough), the copies of $\Omega_*$ do not intersect and are completely included in $\Lambda$, uniformly for all rotations $R_z\in SO(3)$. We obtain the upper bound
\begin{align*}
E_\Lambda(\rho)&\leq K\Per(\Omega_*)+D(\1_\Omega-\rho\1_{\Lambda})\\
&= K\mu_* m_*+\sum_{z\neq z'\in Z}D(\1_{z+R_z\Omega_*},\1_{z'+R_{z'}\Omega_*})\\
&\qquad -2\rho\sum_{z\in Z}D(\1_{z+R_z\Omega_*},\1_{\Lambda})+\rho^2D(\1_\Lambda,\1_\Lambda).
\end{align*}
Next we average this bound over the independent rotations $R_z$'s. Recall that Newton's theorem~\cite[Thm. 9.7]{LieLos-01} implies
\begin{equation}
\fint_{SO(3)}\int_{\R^d} \frac{\1_{z+R\Omega_*}(y)}{|x-y|}\,\dy\,\rd R=\frac{m_*}{|x-z|}\qquad\text{for $|x-z|\geq r_*$.}
\label{eq:Newton_averaged}
\end{equation}
Hence
$$\fint_{SO(3)}\fint_{SO(3)} D(\1_{z+R\Omega_*},\1_{z'+R'\Omega_*})\,\rd R\,\rd R'=\frac{m_*^2}{2|z-z'|}$$
and
\begin{align*}
&2\fint_{SO(3)} D(\1_{z+R\Omega_*},\1_\Lambda)\,\rd R\\
&\qquad\qquad =m_*\int_\Lambda \frac{\dy}{|z-y|}+\fint_{SO(3)} \int_{R\Omega_*}\int_{B(0,r_*)}\left(\frac{1}{|x-y|}-\frac1{|y|}\right)\dy\,\dx\,\rd R\\
&\qquad\qquad \geq m_*\int_\Lambda \frac{\dy}{|z-y|}-m_*\int_{B(0,r_*)}\frac{\dy}{|y|}\\
&\qquad\qquad = m_*\int_\Lambda \frac{\dy}{|z-y|}-2\pi r_*^2 m_*.
\end{align*}
As a conclusion we have
\begin{align*}
E_\Lambda(\rho)&\leq  K\mu_* m_*+\frac12\sum_{z\neq z'\in Z}\frac{m_*^2}{|z-z'|}\\
&\qquad -\rho m_*\sum_{z\in Z}\int_\Lambda\frac{\dy}{|z-y|}+\frac{\rho^2}2\iint_{\Lambda^2}\frac{\dx\,\dy}{|x-y|}+2\pi r_*^2 m_*\rho K.
\end{align*}
Next we estimate the terms associated with the cubes located on the outermost layer. For $\xi\in\Z^3$, we denote
$$\mu_\xi:=\begin{cases}
m_*\sum_{x\in X}\delta_{\ell \xi+x}-\rho\1_{\ell \xi+C_\ell}&\text{if $|\xi|_\ii\leq k$}\\
m_*\sum_{y\in Y}\delta_{\ell \xi+y}-\rho\1_{\ell \xi+C_\ell}&\text{if $|\xi|_\ii= k+1$}.
         \end{cases}
$$
Since $\int \rd\mu_\xi=0$ and $\int x\,\rd\mu_\xi(x)=0$ for all $\xi$, we have
\begin{equation}
|D(\mu_{\xi_0},\mu_{\xi})|\leq \frac{C_N}{\ell |\xi_0-\xi|^5},\qquad \forall \xi\neq \xi_0,\qquad|\xi_0|_\ii=k+1.
\label{eq:quadrupole_interaction}
\end{equation}
Indeed, if $\xi$ is not a neighbor of $\xi_0$ this is~\cite[Lemma~2.1]{AnaLew-20} and the constant $C_N$ can be bounded by $CN^2$. If the two cubes touch, we need to use that $Y$ is at distance $(m_*/\rho)^{1/3}=\ell N^{-1/3}$ to the boundary and we obtain the bound
$$\left|\sum_{j=1}^N\int \frac{\rd\mu_\xi(x)}{|y_j-x|}\right|\leq C(\rho/m_*)^{1/3}N^2=\frac{CN^{7/3}}{\ell}.$$
Similarly, the interaction with the background is
$$\left|\rho\int_{\ell\xi_0+C_\ell} \dy\int\frac{\rd\mu_\xi(x)}{|y-x|}\right|\leq C\rho\ell^2N=\frac{Cm_*N^2}{\ell}$$
because we have the uniform bound $\1_{C_\ell}\ast{|x|}^{-1}\leq C\ell^2$. The latter follows for instance from Newton's theorem~\cite[Thm. 9.7]{LieLos-01}, using
\begin{equation}
\int_{C_\ell}\frac{\dy}{|x-y|}\leq \int_{B(0,r\ell)}\frac{\dy}{|x-y|}=\int_{B(0,r\ell)}\frac{\dy}{\max(|x|,|y|)}\leq \int_{B(0,r\ell)}\frac{\dy}{|y|}=2\pi r^2\ell^2
\label{eq:Newton}
\end{equation}
where $r={\rm diam}(C_1)/2$. This is how we can get~\eqref{eq:quadrupole_interaction} with a constant $C_N$ depending only on $N$.

The series being convergent due to~\eqref{eq:quadrupole_interaction}, we conclude that
$$\sum_{\xi\neq \xi_0}|D(\mu_{\xi_0},\mu_{\xi})|\leq \frac{C_N'}{\ell}$$
for any fixed $\xi_0$ with $|\xi_0|_\ii=k+1$. Hence, denoting by
$$\tilde Z:=\bigcup_{\substack{\xi\in\Z^3\\|\xi|_\ii\leq k}}(X+\ell \xi),\qquad\tilde\Lambda=C_{(2k+1)\ell}$$
the points inside and the background with the outer layer removed, we obtain
\begin{multline*}
E_\Lambda(\rho)\leq  K\mu_* m_*+\frac12\sum_{z\neq z'\in \tilde Z}\frac{m_*^2}{|z-z'|}-\rho m_*\sum_{z\in \tilde Z}\int_{\tilde\Lambda}\frac{\dy}{|z-y|}\\
+\frac{\rho^2}2\iint_{\tilde\Lambda^2}\frac{\dx\,\dy}{|x-y|}+2\pi r_*^2 m_*\rho K+\big((2k+3)^3-(2k+1)^3\big)\left(\frac{C'_N}{\ell}+\eta_Y\right)
\end{multline*}
where
$$\eta_Y=\sum_{1\leq j<k\leq N}\frac1{|y_j-y_k|}-\sum_{j=1}^N\int_{C_\ell}\frac{\dy}{|y_j-y|}+\frac12\iint_{(C_\ell)^2}\frac{\dx\,\dy}{|x-y|}$$
is the energy of one cube from the outer layer, and $(2k+3)^3-(2k+1)^3$ equals the number of such cubes.
Next we divide by the volume $|\Lambda|=(2k+3)^3\ell^3=(2k+3)^3m_*N/\rho$ and we take the limit $k\to\ii$ with all the other parameters fixed. For the Jellium term we use the convergence to the periodic energy as stated for instance in~\cite[Lemma 32]{Lewin-22}. We obtain
\begin{equation}
e(\rho)\leq  \mu_* \rho+\frac{m_*^2}{\ell^3}\left(\sum_{1\leq j<k\leq N}G_\ell(x_j-x_k)+\frac{M_{\Z^3}}{2\ell}\right)+2\pi r_*^2\rho^2
\end{equation}
where $M_{\Z^3}=\lim_{x\to0}(G_1(x)-|x|^{-1})$ is the Madelung constant of the lattice $\Z^3$. As a last step we take the limit $\ell\to\ii$, using that the periodic Jellium problem converges to $e_{\rm Jel}$ in~\eqref{eq:Jellium}, by~\cite[Section V]{LewLieSei-19b} and~\cite[Theorem~40]{Lewin-22}. There is an additional factor $(\rho/m_*)^{4/3}$ since the effective density of the periodic problem is $\rho/m_*$. Our final bound is
\begin{equation}
\boxed{e(\rho)\leq  \mu_*\, \rho+m_*^{\frac23}\,e_{\rm Jel}\,\rho^{\frac43}+2\pi r_*^2\rho^2.}
\label{eq:final_upper_bound}
\end{equation}

\section{Proof of the lower bound in Theorem~\ref{thm:low_density}}\label{sec:proof_lower_bound}

Our proof of the lower bound relies on the grand-canonical version of both the liquid drop model and the Jellium problem. The former consist of minimizing over \textbf{all sets $\Omega$ without any constraint on its volume}, after we have removed the expected leading term $\mu_*|\Omega|$. The corresponding minimal energy thus reads
\begin{equation}
\boxed{F_\Lambda(\rho):=\min_{\Omega\subset\Lambda}\Big\{\Per(\Omega)+D(\1_\Omega-\rho\1_\Lambda)-\mu_*\,|\Omega|\Big\}.}
\label{eq:def_F}
\end{equation}
The name ``grand-canonical'' comes from statistical physics and $\mu_*$ is interpreted as a ``chemical potential''. For us it is an essential concept because the relaxation of the volume constraint will allow for the subsequent localization procedure (the volume constraint is global and not satisfied locally).

We have $E_\Lambda(\rho)\geq F_\Lambda(\rho)+\mu_*\rho|\Lambda|$ and the goal is to prove that, in the thermodynamic limit, $F_\Lambda(\rho)$ converges to the Jellium energy, after taking $\rho\to0$. We split the argument into three steps. First we reduce to sets of side length $A\rho^{-1/3}$ with $A$ a large but fixed constant, before we actually take the limit $\rho\to0$. Finally, we take $A\to\ii$, using that the grand-canonical Jellium problem has the same thermodynamic limit as the canonical one.

\subsection*{Step 1. Reducing to sets of finite size}
In this first step we localize our problem to sets of finite size. For this we use the Graf--Schenker inequality~\cite{GraSch-95,HaiLewSol_thermo-09}. Locally we get the grand-canonical problem introduced in~\eqref{eq:def_F} because although we start with a globally neutral system ($|\Omega|=\rho|\Lambda|$), neutrality does not hold locally. The precise statement is the following.

\begin{proposition}[Reduction to sets of finite size]\label{prop:Graf--Schenker}
Let $\Delta\subset\R^3$ be a tetrahedron with $|\Delta|=1$. Then we have for every $\ell>0$
\begin{equation}
e(\rho)-\mu_*\rho\geq \frac{F_{\ell\Delta}(\rho)}{\ell^3}-C\frac{\rho}{\ell}.
\label{eq:lower_simplex}
\end{equation}
with a constant $C$ depending only on $\Delta$, where we recall that $F_{\ell\Delta}(\rho)$ is the grand-canonical problem defined in~\eqref{eq:def_F}.
\end{proposition}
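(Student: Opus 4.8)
The plan is to combine the thermodynamic-limit identity $e(\rho)=\lim_n|\Lambda_n|^{-1}E_{\Lambda_n}(\rho)$ from Theorem~\ref{thm:thermo_limit}, evaluated on a convenient sequence such as balls $\Lambda_n=\ell_nB(0,1)$, with a localization of the energy of a minimizer onto simplices of size $\ell$. Fix $\ell>0$ and let $\Omega_n\subset\Lambda_n$ be a minimizer, so that $|\Omega_n|=\rho|\Lambda_n|$ and $E_{\Lambda_n}(\rho)=\Per(\Omega_n)+D(\mu_n)$ with $\mu_n:=\1_{\Omega_n}-\rho\1_{\Lambda_n}$. For $U\in SO(3)$ and $z$ ranging over a fundamental cell, let $\cT_{U,z}$ be the tiling of $\R^3$ by the isometric copies $\{\Delta'\}$ of $\ell\Delta$ obtained from a fixed reference simplicial tiling, and average the energy over $(U,z)$.

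For the perimeter, for a.e.\ $(U,z)$ the reduced boundary of $\Omega_n$ charges no face of $\cT_{U,z}$, so $\sum_{\Delta'}\Per(\Omega_n\cap\Delta')\le\Per(\Omega_n)+2\,\cH^2(\Omega_n^{(1)}\cap\cW_{U,z})$, where $\cW_{U,z}$ is the $2$-skeleton of the tiling; averaging and using that this skeleton has surface density $c_\Delta/\ell$ gives $\Per(\Omega_n)\ge\fint\sum_{\Delta'}\Per(\Omega_n\cap\Delta')-C\ell^{-1}|\Omega_n|$. For the Coulomb energy I would invoke the Graf--Schenker inequality~\cite{GraSch-95,HaiLewSol_thermo-09}, which applies precisely because $\Delta$ is a simplex: since $\mu_n$ is globally neutral and of density bounded by $\max(1,\rho)=1$, it yields $D(\mu_n)\ge\fint\sum_{\Delta'}D(\1_{\Delta'}\mu_n)-C\ell^{-1}\!\int d|\mu_n|\ge\fint\sum_{\Delta'}D(\1_{\Delta'}\mu_n)-C\rho\ell^{-1}|\Lambda_n|$, the last step using $\int d|\mu_n|=|\Omega_n|+\rho|\Lambda_n|=2\rho|\Lambda_n|$. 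Here $\1_{\Delta'}\mu_n=\1_{\Omega_n\cap\Delta'}-\rho\1_{\Delta'\cap\Lambda_n}$ is the charge of a liquid drop against a uniform background inside $\Delta'$, and it is \emph{not} neutral; this is why the correct local object is the grand-canonical $F$, with no volume constraint, and this local non-neutrality is exactly what produces the Jellium term in the later steps.

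It then remains to sum the local energies. For every simplex with $\Delta'\subset\Lambda_n$, invariance of $\Per$, $D$ and Lebesgue measure under isometries gives $\Per(\Omega_n\cap\Delta')+D(\1_{\Omega_n\cap\Delta'}-\rho\1_{\Delta'})-\mu_*|\Omega_n\cap\Delta'|\ge F_{\ell\Delta}(\rho)$, and the number of such simplices is $|\Lambda_n|/\ell^3+O(\ell_n^2/\ell^2)$; for the simplices meeting $\partial\Lambda_n$, of which there are at most $C\ell_n^2/\ell^2$ by the Fisher condition~\eqref{eq:Fisher}, the crude bounds $\Per\ge0$, $D(\1_{\Omega_n\cap\Delta'}-\rho\1_{\Delta'\cap\Lambda_n})\ge-C\rho\ell^5$ and $|\Omega_n\cap\Delta'|\le\ell^3$ make their total contribution $o(|\Lambda_n|)$ for fixed $\ell,\rho$. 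Using $\sum_{\Delta'}|\Omega_n\cap\Delta'|=|\Omega_n|=\rho|\Lambda_n|$, we obtain $E_{\Lambda_n}(\rho)\ge\ell^{-3}|\Lambda_n|\,F_{\ell\Delta}(\rho)+\mu_*\rho|\Lambda_n|-C\rho\ell^{-1}|\Lambda_n|+o(|\Lambda_n|)$; dividing by $|\Lambda_n|$ and letting $n\to\infty$ gives the claimed inequality, with $C$ depending only on $\Delta$.

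The only genuinely non-routine ingredient is the Coulomb localization: the Graf--Schenker inequality with an \emph{extensive} error (proportional to $\ell^{-1}\int d|\mu_n|\sim\ell^{-1}\rho|\Lambda_n|$, rather than quadratic in the charge) is the crucial point, and it rests on the geometric fact that averaging a simplicial tiling over rotations and translations turns the kernel $1/|x-y|$ into one that is localized to a single tile up to a correction of size $\ell^{-1}$ times the total charge, uniformly in the charge distribution. A secondary point deserving care is that the local pieces are non-neutral and must be matched correctly with $F_{\ell\Delta}(\rho)$, and that all surface contributions are genuinely negligible after normalization by $|\Lambda_n|$ — which is where the assumptions~\eqref{eq:ass_balls_Lambda_n}--\eqref{eq:Fisher} on the domains enter.
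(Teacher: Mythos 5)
Your overall route is the same as the paper's: localize the perimeter by averaging over a family of simplices (with error of order $\ell^{-1}|\Omega_n|=\rho|\Lambda_n|/\ell$), localize the Coulomb energy by the Graf--Schenker inequality, bound each interior piece from below by the grand-canonical energy $F_{\ell\Delta}(\rho)$ of \eqref{eq:def_F} after adding and subtracting $\mu_*|\Omega_n\cap\Delta'|$, control the boundary pieces via the Fisher condition \eqref{eq:Fisher}, and pass to the limit using Theorem~\ref{thm:thermo_limit}. There is, however, one step that fails in the stated generality: you localize with ``the tiling of $\R^3$ by isometric copies of $\ell\Delta$ obtained from a fixed reference simplicial tiling''. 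A general tetrahedron does not tile $\R^3$ (the regular tetrahedron is the classical counterexample), so for most $\Delta$ the family $\cT_{U,z}$ you average over simply does not exist, while the proposition is asserted for every tetrahedron of unit volume. The paper avoids this by averaging a single copy $g\ell\Delta$ over the isometry group $G=\R^3\rtimes SO(3)$ with the normalization $\ell^{-3}\int_G\1_{g\ell\Delta}(x)\,\rd g=1$; this yields exact localization identities for the volume and the perimeter (the perimeter lemma is an identity with constant $\Per(\Delta)$, matching your skeleton computation) and is also the form in which the Graf--Schenker inequality is invoked. Your argument is repaired by replacing the random tiling with this group average, or by restricting to a space-filling simplex (which would suffice for the later steps of the lower bound, but not for the statement as written).

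A secondary point: your account of the Coulomb localization is off, even though the inequality you actually use is true. For bounded densities such as $\mu_n=\1_{\Omega_n}-\rho\1_{\Lambda_n}$ the Graf--Schenker inequality holds with \emph{no} error term: the kernel $\bigl(1-\ell^{-3}\int_G\1_{g\ell\Delta}(x)\1_{g\ell\Delta}(y)\,\rd g\bigr)/|x-y|$ is positive definite, so $D(\mu_n)\geq\ell^{-3}\int_G D(\1_{g\ell\Delta}\mu_n)\,\rd g$; the error $C\ell^{-1}\sum_i e_i^2$ is needed only for point charges (diagonal terms), and global neutrality of $\mu_n$ plays no role. In particular, your heuristic that the averaging ``turns the kernel into one localized to a single tile up to a correction of size $\ell^{-1}$ times the total charge, uniformly in the charge distribution'' is false as a two-sided statement: the discarded long-range part can be of the order of the square of the total charge; what is true, and all you need, is its one-sided positivity. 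Consequently, in the paper the entire $-C\rho/\ell$ error in \eqref{eq:lower_simplex} comes from the perimeter localization, not from the Coulomb term; your version, which carries an error in both places, still closes because the Coulomb inequality you assert is weaker than the true Graf--Schenker bound.
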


\begin{proof}
Let $G:= \R^3\rtimes SO(\R^3)$, which acts on tetrahedra by translations and rotations. Elements of $G$ are denoted by $g$ and integration with respect to Haar measure by $\rd g$. The normalization is chosen so that
$$\frac1{\ell^3}\int_G \1_{g\ell\Delta}(x) \,\rd g=1$$
for all $\ell>0$ and $x\in\R^3$ (the left side is constant since we integrate over translations). Multiplying by $\1_\Omega$ and integrating we deduce that for any Borel set $\Omega$,
\begin{equation}
|\Omega| = \int_G |\Omega\cap g\ell\Delta| \,\frac{\rd g}{\ell^3}.
\label{eq:GS_volume}
\end{equation}
We show in the next lemma that with $C=\Per(\Delta)$ we have
\begin{equation}
\Per(\Omega) = \int_G \Per(\Omega\cap g\ell\Delta)\,\frac{\rd g}{\ell^3} - C \frac{|\Omega|}{\ell} \,.
\label{eq:GS_perimeter}
\end{equation}
Next we take a large cube $\Lambda_L:=(-L/2,L/2)^3$ and any domain $\Omega\subset\Lambda_L$. The Graf--Schenker inequality from~\cite{GraSch-95} states that
	\begin{multline*}
		\frac12 \iint_{\Lambda_L\times\Lambda_L} \frac{(\1_\Omega(x)-\rho)\,(\1_\Omega(y)-\rho)}{|x-y|}\,\dx\,\dy \\
		 \geq \int_G \frac{\rd g}{\ell^3} \ \frac12 \iint_{(\Lambda_L\cap g\ell\Delta) \times(\Lambda_L\cap g\ell\Delta)} \frac{(\1_\Omega(x)-\rho)\,(\1_\Omega(y)-\rho)}{|x-y|}\dx\,\dy \,.
	\end{multline*}
For this inequality to hold it is essential to average both over rotations and translations.
For fixed $g$ we expand the integrand in the remaining double integral.
Together with~\eqref{eq:GS_volume} and~\eqref{eq:GS_perimeter}, we obtain for all $\Omega\subset\Lambda_L$
	\begin{align}
		\mathcal E_{\Lambda_L}[\rho,\Omega] & \geq \int_G  \cE_{\Lambda_L\cap g\ell\Delta}[\Omega\cap g\ell\Delta]\frac{\rd g}{\ell^3}  - C \frac{|\Omega|}{\ell} \nn\\
		& = \int_G \left( \cE_{\Lambda_L\cap g\ell\Delta}[\Omega\cap g\ell\Delta]-\mu_*|\Omega\cap g\ell\Delta|\right)\frac{\rd g}{\ell^3} +\left( \mu_* - \frac{C}\ell\right) |\Omega| \,.\label{eq:GS}
	\end{align}
When $g$ is so that $g\ell\Delta\subset \Lambda_L$ we have the lower bound
$$\cE_{\Lambda_L\cap g\ell\Delta}[\Omega\cap g\ell\Delta]-\mu_*|\Omega\cap g\ell\Delta|\geq F_{g\ell\Delta}(\rho)=F_{\ell\Delta}(\rho)$$
with the last equality due to the invariance of the problem under rotations and translations. Let us denote by
$$I_{L,\ell}:=\big\{g\in G\ :\ g\ell\Delta\subset \Lambda_L\big\},\qquad  B_{L,\ell}:=\big\{g\in G\ :\ g\ell\Delta\cap \partial\Lambda_L\neq\emptyset\big\}$$
the sets corresponding to the interior and boundary tetrahedra. For $g\in B_{L,\ell}$ the corresponding tetrahedron intersects the boundary of the large cube $\Lambda_L$. Let denote by $(\partial\Lambda_L)_{R}:=\partial\Lambda_L+B(0,R)$ the set of points at distance $\leq R$ to the boundary of the large cube. Then we have $g\ell\Delta\subset (\partial\Lambda_L)_{R\ell}$ for a large enough universal constant $R$ and thus
\begin{align*}
\int_{B_{L,\ell}}\rd g=\int_{B_{L,\ell}}|g\ell\Delta|\frac{\rd g}{\ell^3}&=\int_{B_{L,\ell}}|(\partial\Lambda_L)_{R\ell}\cap g\ell\Delta|\frac{\rd g}{\ell^3}\\
&\leq \int_G|(\partial\Lambda_L)_{R\ell}\cap g\ell\Delta|\frac{\rd g}{\ell^3}=|(\partial\Lambda_L)_{R\ell}|\leq CL^2\ell
\end{align*}
by~\eqref{eq:GS_volume} with $\Omega$ replaced by $(\partial\Lambda_L)_{C\ell}$.
Similarly, we have
$$
0\leq |\Lambda_L|-\int_{I_{L,\ell}}\rd g=\int_{B_{L,\ell}}|\Lambda_L\cap g\ell\Delta|\frac{\rd g}{\ell^3}\leq\int_{B_{L,\ell}}\rd g\leq CL^2\ell.
$$
Using that $\cE_{\Lambda_L\cap g\ell\Delta}\geq0$ for the boundary tetrahedra, we can thus estimate the integral over $G$ by
\begin{align*}
&\int_G \left( \cE_{\Lambda_L\cap g\ell\Delta}[\Omega\cap g\ell\Delta]-\mu_*|\Omega\cap g\ell\Delta|\right)\frac{\rd g}{\ell^3}\\
&\qquad\qquad\geq \frac{F_{\ell\Delta}(\rho)}{\ell^3}\int_{I_{L,\ell}} \rd g-\mu_*\int_{B_{L,\ell}} \rd g\\
&\qquad\qquad\geq \frac{F_{\ell\Delta}(\rho)}{\ell^3}L^3-C\left(\mu_*+\frac{|F_{\ell\Delta}(\rho)|}{\ell^3}\right)L^2\ell.
\end{align*}
Inserting this in~\eqref{eq:GS} we obtain after minimizing over $\Omega$ with the global neutrality constraint $|\Omega|=\rho|\Lambda_L|$
\begin{equation}
		\frac{E_{\Lambda_L}(\rho)}{L^3}-\mu_*\,\rho\geq \frac{F_{\ell\Delta}(\rho)}{\ell^3}-C\left(\mu_*+\frac{|F_{\ell\Delta}(\rho)|}{\ell^3}\right)\frac{\ell}L - C \frac\rho\ell.\label{eq:GS_final}
	\end{equation}
Finally, taking the limit $L\to\ii$ using Theorem~\ref{thm:thermo_limit} we obtain the claimed inequality~\eqref{eq:lower_simplex}.
\end{proof}

In the proof we used the following localization lemma concerning the perimeter.

\begin{lemma}[Localization of the perimeter]
	Given a tetrahedron $\Delta\subset\R^3$ of unit volume, we have
	$$
	\Per(\Omega) = \int_G \Per(\Omega\cap g\ell\Delta)\,\frac{\rd g}{\ell^3} - \Per(\Delta) \frac{|\Omega|}{\ell}
	$$
	for any $\ell>0$ and for any Borel set $\Omega\subset\R^3$ of finite measure and finite perimeter.
\end{lemma}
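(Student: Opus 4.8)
The plan is to reduce the identity to the corresponding statement for smooth sets and then pass to the limit, using that both sides of the claimed identity are continuous under the appropriate approximation. First I would treat the case where $\Omega$ has smooth (or at least Lipschitz) boundary, so that $\Per(\Omega) = \cH^2(\partial\Omega)$ and, for almost every $g\in G$, the slice $g\ell\Delta$ meets $\partial\Omega$ transversally; then $\partial(\Omega\cap g\ell\Delta)$ decomposes into the part $\partial\Omega\cap g\ell\Delta$ coming from $\Omega$ and the part $\Omega\cap\partial(g\ell\Delta)$ coming from the four faces of the tetrahedron, these two pieces overlapping only on a $\cH^2$-null set (the edges of $g\ell\Delta$ intersected with $\partial\Omega$). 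Thus
\begin{equation*}
\Per(\Omega\cap g\ell\Delta) = \cH^2\big(\partial\Omega\cap g\ell\Delta\big) + \cH^2\big(\Omega\cap\partial(g\ell\Delta)\big).
\end{equation*}

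Next I would integrate this over $G$ against $\rd g/\ell^3$ and handle the two terms separately. For the first term, applying the slicing identity $\frac1{\ell^3}\int_G \1_{g\ell\Delta}(x)\,\rd g = 1$ with the surface measure $\cH^2\llcorner\partial\Omega$ in place of Lebesgue measure (Fubini for the Haar integral, exactly as in the derivation of~\eqref{eq:GS_volume}) gives $\int_G \cH^2(\partial\Omega\cap g\ell\Delta)\,\rd g/\ell^3 = \cH^2(\partial\Omega) = \Per(\Omega)$. For the second term, I would decompose $\partial(g\ell\Delta)$ into its four faces; since the faces of $g\ell\Delta$ are congruent (via $g$) to the faces of $\ell\Delta$, which are congruent to $\ell$ times the faces of $\Delta$, a change of variables reduces $\int_G \cH^2(\Omega\cap\partial(g\ell\Delta))\,\rd g/\ell^3$ to an integral of the form $-\Per(\Delta)\,|\Omega|/\ell$. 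Concretely, integrating $\1_\Omega$ against the measure obtained by pushing $\cH^2\llcorner\partial(g\ell\Delta)$ through the Haar integral produces $c\,\ell^{-1}|\Omega|$ with the constant $c$ computed from the normalization; tracking the scaling ($\partial(\ell\Delta)$ has area $\ell^2\Per(\Delta)$ while we divide by $\ell^3$) pins down $c = \Per(\Delta)$, and the sign is negative because this boundary contribution is being \emph{subtracted} to recover $\Per(\Omega)$ from the average of $\Per(\Omega\cap g\ell\Delta)$.

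Finally I would remove the smoothness assumption by approximation: given any Borel $\Omega$ of finite measure and finite perimeter, choose smooth sets $\Omega_k$ with $\1_{\Omega_k}\to\1_\Omega$ in $L^1$ and $\Per(\Omega_k)\to\Per(\Omega)$ (standard, via mollification). The right-hand side passes to the limit using that $\Omega\mapsto\Per(\Omega\cap g\ell\Delta)$ is lower semicontinuous in $L^1$ for each $g$ together with Fatou and a matching upper bound from $\Per(\Omega\cap g\ell\Delta)\leq\Per(\Omega)+\Per(g\ell\Delta)$, or more cleanly by noting the identity is already known (for fixed $g$ and a.e.\ choice of $\ell\Delta$, cf.~\cite[Lem.~15.12]{Maggi-12}) to hold with $\partial(g\ell\Delta)$ replaced by a transversal hyperplane and assembling the four faces; the volume term is continuous in $L^1$. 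The main obstacle is the bookkeeping in the second step: justifying the transversality of slices for a.e.\ $g$, controlling the overlap on the edges of the tetrahedron, and getting the constant $\Per(\Delta)$ with the correct sign from the Haar-measure normalization. Everything else is routine measure theory.
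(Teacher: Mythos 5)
Your proposal is correct and follows essentially the same route as the paper: reduce to smooth sets, split $\Per(\Omega\cap g\ell\Delta)$ into the $\partial\Omega$ part (recovered by the slicing/Fubini identity) and the face part of the tetrahedron (computed by Fubini over translations to give $\Per(\Delta)\,|\Omega|/\ell$), then handle general sets by approximation or via the essential boundary. The only difference is cosmetic: the paper makes the face computation explicit by integrating separately over translations parallel and orthogonal to each face plane, which is the precise form of your scaling/invariance argument for the constant.
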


\begin{proof}
By scaling it suffices to prove this for $\ell=1$. We may also assume that $\Omega$ is sufficiently smooth (for instance Lipschitz). The general case then follows either by approximation or else by working with the essential boundary in the sense of De Giorgi.

We denote the faces of $\Delta$ by $\Gamma_j$, $j=0,1,2,3$. Then
	$$
	\Per(\Omega\cap g\Delta) = \mathcal H^2(\partial\Omega \cap g\Delta) + \sum_{j=0}^3 \mathcal H^2(\Omega\cap g\Gamma_j) \,.
	$$
	By Fubini's theorem, we obtain
	$$
	\int_G \mathcal H^2(\partial\Omega \cap g\Delta)\,\frac{\rd g}{\ell^3} = \Per \Omega \,.
	$$
	Thus, our goal is to compute the term involving $\mathcal H^2(\Omega\cap g\Gamma_j)$. We first perform the integral over translations. Let $H_j$ be the affine plane spanned by $\Gamma_j$. The integral over translations can be split into the integral with respect to translations parallel to $H_j$ and those orthogonal to $H_j$. Integrating $\mathcal H^2(\Omega\cap g\Gamma_j)$ over translations parallel to $H_j$, we obtain $\mathcal H^2(\Gamma_j) \ \mathcal H^2(\Omega\cap H_j)$. Integrating this resulting quantity over translation orthogonal to $H_j$ we obtain, by Fubini's theorem, $\mathcal H^2(\Gamma_j)\ |\Omega|$. Finally, we have to integrate with respect to rotations, but since the corresponding measure is a probability measure, this does not change the value $\mathcal H^2(\Gamma_j)\ |\Omega|$. This proves the claimed identity with constant $\sum_{j=0}^3\mathcal H^2(\Gamma_j)=\Per(\Delta)$.
\end{proof}

Since our goal is to derive the term of order $\rho^{4/3}$, it is natural to take $\ell=A\rho^{-1/3}$ where $A$ is a fixed large constant that we will take to infinity \textbf{after we have taken the limit $\rho\to0$}. The bound~\eqref{eq:lower_simplex} can then be written in the form
\begin{equation}
\frac{e(\rho)-\mu_*\rho}{\rho^{4/3}}\geq \frac{F_{\ell\Delta}(\rho)}{\rho^{1/3}A^3}-\frac{C}{A}.
\label{eq:lower_simplex_bis}
\end{equation}

\subsection*{Step 2. Ultra-dilute limit}
Starting from~\eqref{eq:lower_simplex_bis} we need to prove the convergence of $ \rho^{-1/3}F_{\ell\Delta}(\rho)$ to the Jellium problem in finite volume. This is stated in the following proposition.

\begin{proposition}[Ultra-dilute limit]\label{prop:limit_finite_charge}
Let $\Delta$ a fixed tetrahedron of volume $1$. Then we have for $\ell=A\rho^{-1/3}$ with $A$ a fixed large enough constant
\begin{multline}
\lim_{\rho\to0}\rho^{-\frac13}F_{\ell\Delta}(\rho)=\min_{n\geq0}\min_{x_1,...,x_n\in A\Delta}\bigg\{\sum_{1\leq j<k\leq n}\frac{m_*^2}{|x_j-x_k|}\\
- m_*\sum_{j=1}^n\int_{A\Delta}\frac{\dy}{|x_j-y|}+\frac{1}{2}\iint_{A\Delta\times A\Delta}\frac{\dx\,\dy}{|x-y|}\bigg\}.
\label{eq:to_be_proved}
\end{multline}
\end{proposition}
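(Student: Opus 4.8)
The plan is to prove the two matching inequalities $\limsup_{\rho\to0}\rho^{-1/3}F_{\ell\Delta}(\rho)\leq(\text{RHS})$ and $\liminf_{\rho\to0}\rho^{-1/3}F_{\ell\Delta}(\rho)\geq(\text{RHS})$, where $(\text{RHS})$ denotes the right side of~\eqref{eq:to_be_proved}. Everything becomes transparent after the rescaling $x\mapsto\rho^{-1/3}x$, which maps $\ell\Delta=\rho^{-1/3}(A\Delta)$ onto the fixed set $A\Delta$, sends a droplet of volume $m_*$ to one of volume $\rho m_*$ and diameter $O(\rho^{1/3})$ — hence a point charge in the limit — and multiplies $\Per$ by $\rho^{-2/3}$ and $D$ by $\rho^{-5/3}$. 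Writing $W:=\1_{A\Delta}*|x|^{-1}$, a bounded and Lipschitz function, one has $\int_{\ell\Delta}|a-y|^{-1}\dy=\rho^{-2/3}W(\rho^{1/3}a)$ and $\rho^2 D(\1_{\ell\Delta})=\rho^{1/3}D(\1_{A\Delta})$, with $D(\1_{A\Delta})=\tfrac12\iint_{A\Delta\times A\Delta}|x-y|^{-1}$ the $n=0$ term of~\eqref{eq:to_be_proved}.

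For the upper bound, fix $n$ and distinct interior points $\xi_1,\dots,\xi_n\in A\Delta$ and, for small $\rho$, use the competitor $\Omega=\bigcup_{j=1}^n(a_j+R_j\Omega_*)$ with $a_j=\rho^{-1/3}\xi_j$, $\Omega_*\subset B(0,r_*)$ an optimal set for $\mu_*$ of volume $m_*$, and rotations $R_j\in SO(3)$; for $\rho$ small the copies are disjoint and contained in $\ell\Delta$. Since $\Per\Omega_*+D(\1_{\Omega_*})=\mu_* m_*$, the droplet self-energies cancel the bulk of $\Per\Omega-\mu_*|\Omega|$, leaving only the droplet–droplet and droplet–background interactions; averaging over the $R_j$ and invoking Newton's theorem exactly as in Section~\ref{sec:proof_upper_bound} replaces these by $\tfrac12\sum_{j\neq k}m_*^2/|a_j-a_k|$ and $-\rho m_*\sum_j\int_{\ell\Delta}|a_j-y|^{-1}\dy$, up to an error $\leq 2\pi r_*^2 m_*\rho n=o(\rho^{1/3})$. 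After rescaling, $\limsup_\rho\rho^{-1/3}F_{\ell\Delta}(\rho)\leq\sum_{j<k}m_*^2/|\xi_j-\xi_k|-m_*\sum_j\int_{A\Delta}|\xi_j-y|^{-1}\dy+D(\1_{A\Delta})$; optimizing over $n$ and over $\{\xi_j\}$ — the boundary and the value $n=0$ are reached by continuity of the Jellium functional — gives $\limsup_\rho\rho^{-1/3}F_{\ell\Delta}(\rho)\leq(\text{RHS})$.

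For the lower bound, let $\Omega_\rho\subset\ell\Delta$ minimize $F_{\ell\Delta}(\rho)$ (existence by the direct method). Comparing with $\Omega=\emptyset$ gives $F_{\ell\Delta}(\rho)\leq\rho^{1/3}D(\1_{A\Delta})$, which together with $\|\1_{\ell\Delta}*|x|^{-1}\|_\infty\leq C\ell^2$ and $\Per\Omega_\rho+D(\1_{\Omega_\rho})\geq\mu_*|\Omega_\rho|$ yields $\Per\Omega_\rho+D(\1_{\Omega_\rho})-\mu_*|\Omega_\rho|\leq C(A)\rho^{1/3}(1+|\Omega_\rho|)$. The first step is to upgrade this to a uniform bound $|\Omega_\rho|\leq C(A)$ (discussed below); granting it, the excess $\Per\Omega_\rho+D(\1_{\Omega_\rho})-\mu_*|\Omega_\rho|$ is $O(\rho^{1/3})$. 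Passing to a subsequence $\rho_m\to0$ realizing the liminf with $|\Omega_{\rho_m}|\to m\geq0$: if $m=0$ then $\rho_m^{-1/3}F_{\ell\Delta}(\rho_m)\to D(\1_{A\Delta})$, the $n=0$ value. If $m>0$, Corollary~\ref{cor:repulsion_ell_n}, applied with $\ell_n=\rho_m^{-1/3}$ and the fixed set $A\Delta$, decomposes $\Omega_{\rho_m}$ into $K\leq m/m_{**}$ near-minimizing droplets of masses $m_*^{(j)}\in[m_{**},m_*]$ at centers $a_m^{(j)}$ with $\rho_m^{1/3}a_m^{(j)}\to\xi^{(j)}$ for $K$ \emph{distinct} points $\xi^{(j)}\in\overline{A\Delta}$, with $\sum_j m_*^{(j)}=m$, negligible volume and perimeter outside $\bigcup_j B(a_m^{(j)},R)$, and, for $K\geq2$, $\Per\Omega_{\rho_m}+D(\1_{\Omega_{\rho_m}})-\mu_*|\Omega_{\rho_m}|\geq\rho_m^{1/3}\sum_{j<k}m_*^{(j)}m_*^{(k)}/|\xi^{(j)}-\xi^{(k)}|+o(\rho_m^{1/3})$. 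The $L^1$-localization of the droplets and the Lipschitz bound on $W$ give $2\rho_m D(\1_{\Omega_{\rho_m}},\1_{\ell\Delta})=\rho_m^{1/3}\sum_j m_*^{(j)}W(\xi^{(j)})+o(\rho_m^{1/3})$, and therefore
\[
\rho_m^{-1/3}F_{\ell\Delta}(\rho_m)\ \geq\ \sum_{1\leq j<k\leq K}\frac{m_*^{(j)}m_*^{(k)}}{|\xi^{(j)}-\xi^{(k)}|}-\sum_{j=1}^K m_*^{(j)}\int_{A\Delta}\frac{\dy}{|\xi^{(j)}-y|}+D(\1_{A\Delta})+o(1).
\]
Thus $\liminf_\rho\rho^{-1/3}F_{\ell\Delta}(\rho)$ is at least the minimum of this Jellium functional $\cG$ over $K\geq0$, distinct $\xi^{(j)}\in\overline{A\Delta}$ and \emph{arbitrary} masses $m_*^{(j)}\in[m_{**},m_*]$.

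It remains to check that this relaxed minimum equals $(\text{RHS})$, i.e. that an optimal configuration may be taken with all masses equal to $m_*$. This holds because $\cG$ is affine in each $m_*^{(j)}$ separately, with slope $\phi_j=\sum_{k\neq j}m_*^{(k)}/|\xi^{(j)}-\xi^{(k)}|-\int_{A\Delta}|\xi^{(j)}-y|^{-1}\dy$ independent of $m_*^{(j)}$, and the relaxed minimum is attained (the repulsion forces $\cG\to+\infty$ as $K\to\infty$, and $\cG$ is lower semicontinuous on the compact parameter set, the pairwise singularities being excluded). At an optimum $\phi_j\leq0$ for every $j$, for otherwise deleting charge $j$ would lower $\cG$ by $m_*^{(j)}\phi_j>0$; hence $\cG$ is non-increasing in each $m_*^{(j)}$, and whenever $m_*^{(j)}<m_*$ at the optimum one necessarily has $\phi_j=0$, so raising that mass to $m_*$ keeps $\cG$ at its minimal value and the configuration optimal. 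Iterating this yields an optimal configuration with all masses $m_*$, so the relaxed minimum is $(\text{RHS})$; combined with the upper bound this proves Proposition~\ref{prop:limit_finite_charge}. The main obstacle is precisely the a priori bound $|\Omega_\rho|\leq C(A)$ — heuristically a droplet gains only $O(\rho^{1/3})$ from the background while the mutual Coulomb repulsion of the droplets grows like $\rho^{1/3}(\#\text{droplets})^2$, so a minimizer contains only $O_A(1)$ droplets, but making this rigorous requires adapting the bubble-extraction of Theorem~\ref{thm:decomposition} together with the repulsion estimate, the subtlety being that a priori only $|\Omega_\rho|\leq|\ell\Delta|=A^3\rho^{-1}$ is available. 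Once this bound is in place, the point-charge reduction and the control of the background potential are routine.
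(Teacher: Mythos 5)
Your argument has a genuine gap, and you name it yourself: the a priori bound $|\Omega_\rho|\leq C(A)$ on the mass of a minimizer of the grand-canonical problem is never proved, only motivated heuristically, and everything in your lower bound (the application of Corollary~\ref{cor:repulsion_ell_n}, the point-charge reduction, the mass-adjustment step) hinges on it. The route you suggest for closing it --- ``adapting the bubble-extraction of Theorem~\ref{thm:decomposition} together with the repulsion estimate'' --- does not work as stated, because Theorem~\ref{thm:decomposition} and its corollary \emph{presuppose} that $|\Omega_n|$ stays bounded (the hypothesis $|\Omega_n|\to m>0$ with $m$ finite); with only $|\Omega_\rho|\leq A^3\rho^{-1}$ available, the set need not decompose into $O_A(1)$ discrete droplets at all, and the compactness underlying the bubble extraction is lost. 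The paper closes exactly this gap with Lemma~\ref{lem:estim_mass}: a slicing argument in the spirit of Frank--Killip--Nam, cutting $\Omega$ by random hyperplanes and using Newton's theorem ($\1_\Lambda\ast|\cdot|^{-1}\leq 2\pi R^2$) to control the background term, which gives $|\Omega|\leq 8+16\pi\rho\,{\rm diam}(\Lambda)^3=O(A^3)$ for \emph{any} set with $\cE_{\ell\Delta}[\rho,\Omega]\leq\mu_*|\Omega|$, with no droplet structure assumed. Note also that to invoke such a lemma you must know $F_{\ell\Delta}(\rho)\leq 0$; your comparison with $\Omega=\emptyset$ only yields the positive bound $F_{\ell\Delta}(\rho)\leq\rho^{1/3}D(\1_{A\Delta})$, whereas the paper first shows the right side of~\eqref{eq:to_be_proved} is strictly negative (by averaging the point positions over $A\Delta$ and a convexity interpolation in $n$), whence $F_{\ell\Delta}(\rho)<0$ for small $\rho$ via the upper bound.

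Apart from this missing ingredient, your proof follows the paper's scheme: the upper bound by rotated copies of $\Omega_*$ at rescaled Jellium points with Newton averaging, and the lower bound by Corollary~\ref{cor:repulsion_ell_n}, weak convergence of the rescaled droplets to point masses, and continuity of $\1_{A\Delta}\ast|\cdot|^{-1}$. Your final reduction of the masses $m_*^{(j)}$ to $m_*$ is correct but more roundabout than the paper's: you need existence of a relaxed minimizer and a first-order argument at it, while the paper simply uses that the limiting functional is affine in each $m_*^{(j)}$ and therefore can be decreased by pushing each mass to $0$ (deleting the point) or to $m_*$, directly bounding the configuration produced by the decomposition from below by the right side of~\eqref{eq:to_be_proved}.
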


The right side is the Jellium energy for particles with a charge $m_*$, in a fixed tetrahedron of volume $A^3$ and a uniform background of density 1. It is grand-canonical because we optimize over the number of particles $n$ instead of improsing the neutrality condition $m_*n=A^3$. It is however known that the optimal $n$ satisfies $n=A^3/m_*+o(A^2)$. Fluctuations about the neutral case are negligible compared to the surface, otherwise the resulting energy is modified to the order of the volume (see~\cite{LieLeb-72,LieNar-75,GraSch-95b} and \cite[Cor.~50]{Lewin-22}).

We call~\eqref{eq:to_be_proved} the ``ultra-dilute'' limit because we expect that the solution to the left side involves finitely many droplets (of the order of $A^3$) located at distances of order $\rho^{-1/3}$ in the very large domain $\ell\Delta$ of side length $\ell=A\rho^{-1/3}$. A similar limit was considered for the canonical problem in~\cite{ChoPel-10,KnuMurNov-16,EmmFraKon-20}. After rescaling everything by $\rho^{1/3}$ the droplets look like points and we find the right side.

To prove Proposition~\ref{prop:limit_finite_charge}, we will need the following auxilliary lemma, which provides an estimate on the volume of minimizers for the grand-canonical liquid drop problem with background in~\eqref{eq:def_F}.

\begin{lemma}[Estimate on the mass]\label{lem:estim_mass}
Let $\Omega\subset \Lambda$ be so that $\cE_\Lambda[\rho,\Omega]\leq \mu_*\,|\Omega|$. Then we have
\begin{equation}
|\Omega|\leq 8+16\pi\rho\ {\rm diam}(\Lambda)^3.
\label{eq:mass_bound}
\end{equation}
\end{lemma}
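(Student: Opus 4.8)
The plan is to reduce the statement to a quantitative sharpening of the elementary inequality $\Per\Omega+D(\1_\Omega)\geq\mu_*|\Omega|$, the perturbation coming from the electronic background being absorbed by the Coulomb self-energy of $\Omega$ inside the bounded set $\Lambda$.

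First I would expand the electrostatic term,
$$\cE_\Lambda[\rho,\Omega]=\Per\Omega+D(\1_\Omega)-2\rho\,D(\1_\Omega,\1_\Lambda)+\rho^2D(\1_\Lambda),$$
discard the nonnegative term $\rho^2D(\1_\Lambda)$, and estimate the mixed term by Newton's theorem as in~\eqref{eq:Newton}: since $\Lambda\subset B(x,{\rm diam}(\Lambda))$ for every $x\in\Lambda$, one has $\int_\Lambda|x-y|^{-1}\dy\leq 2\pi\,{\rm diam}(\Lambda)^2$, hence
$$2\rho\,D(\1_\Omega,\1_\Lambda)=\rho\int_\Omega\Big(\int_\Lambda\frac{\dy}{|x-y|}\Big)\dx\leq 2\pi\rho\,{\rm diam}(\Lambda)^2\,|\Omega|.$$
Using in addition that $\Per\Omega+D(\1_\Omega)\geq\mu_*|\Omega|$ by the very definition~\eqref{eq:def_mu_*} of $\mu_*$, the hypothesis $\cE_\Lambda[\rho,\Omega]\leq\mu_*|\Omega|$ becomes
$$0\leq\Per\Omega+D(\1_\Omega)-\mu_*|\Omega|\leq 2\pi\rho\,{\rm diam}(\Lambda)^2\,|\Omega|,$$
so that $\Omega$ is a near–minimizer for $\mu_*$. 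When $\rho=0$ this forces $I[\Omega]=\mu_*$, hence $\Omega$ is an honest minimizer and $|\Omega|\leq m_*$ by the definition~\eqref{eq:def_m_*} of $m_*$; together with $m_*\leq 8$ from~\cite{FraKilNam-16} this is already~\eqref{eq:mass_bound}, and the task for $\rho>0$ is to make this robust.

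For that I would upgrade the bound $\Per\Omega+D(\1_\Omega)\geq\mu_*|\Omega|$ to one that actually detects the volume, exploiting that $\Omega\subset\Lambda$ has finite diameter. The elementary input is that, since $|x-y|\leq{\rm diam}(\Lambda)$ for $x,y\in\Omega$, one has $\int_{\Omega'}|x-y|^{-1}\dy\geq|\Omega'|/{\rm diam}(\Lambda)$ on any Borel set $\Omega'\subset\Omega$, so $D(\1_{\Omega'})\geq|\Omega'|^2/(2\,{\rm diam}(\Lambda))$. Assuming $|\Omega|>m_*$ (otherwise we conclude as above), I would split $\Omega$ along a sphere, $\Omega=\Omega^i\sqcup\Omega^o$ with $\Omega^i=\Omega\cap B(a,r)$, choosing $r$ by an averaging argument in the spirit of Step~1 of the proof of Theorem~\ref{thm:decomposition} so that $|\Omega^i|=m_*$ and the surface term $\cH^2(\Omega\cap\partial B(a,r))$ is negligible; using $\Per\Omega=\Per\Omega^i+\Per\Omega^o-2\cH^2(\Omega\cap\partial B(a,r))$ and $D(\1_\Omega)=D(\1_{\Omega^i})+D(\1_{\Omega^o})+2D(\1_{\Omega^i},\1_{\Omega^o})$, and applying $\Per\Omega^i+D(\1_{\Omega^i})\geq\mu_*|\Omega^i|$ at full strength to the core, $D(\1_{\Omega^o})\geq|\Omega^o|^2/(2\,{\rm diam}(\Lambda))$ to the remainder, and $D(\1_{\Omega^i},\1_{\Omega^o})\geq0$, one is led (iterating the cut so as to peel off of the order of $|\Omega|/m_*$ cores, whose pairwise interactions add up to $\gtrsim|\Omega|^2/{\rm diam}(\Lambda)$) to a lower bound of the form
$$\Per\Omega+D(\1_\Omega)-\mu_*|\Omega|\ \gtrsim\ \frac{(|\Omega|-m_*)^2}{{\rm diam}(\Lambda)};$$
in the competing regime, where $\Omega$ is too concentrated for the cuts to be cheap, the unconditional bound $D(\1_\Omega)\geq|\Omega|^2/(2\,{\rm diam}(\Lambda))$ suffices on its own, since then $D(\1_\Omega)\gtrsim|\Omega|^{5/3}$ dominates $\mu_*|\Omega|$. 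Inserting this quadratic lower bound into the inequality of the previous paragraph gives $(|\Omega|-m_*)_+^2\lesssim\rho\,{\rm diam}(\Lambda)^3\,|\Omega|$; solving this quadratic inequality for $|\Omega|$ and then using $m_*\leq 8$ yields~\eqref{eq:mass_bound}.

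I expect the main obstacle to be the sphere–cutting step: unlike in the proof of Theorem~\ref{thm:decomposition}, where a divergent radius $R_n$ is available, here there is no large length scale to exploit, so making $\cH^2(\Omega\cap\partial B(a,r))$ small (or absorbing it into the other error terms) must be arranged by hand — by choosing the center $a$ and the window of admissible radii so that the volume of $\Omega$ grows slowly across the cut, with a separate, more robust argument via $D(\1_\Omega)\geq|\Omega|^2/(2\,{\rm diam}(\Lambda))$ handling the concentrated case in which this fails. Tracking the constants so as to land precisely on $8$ and $16\pi$ is the other mildly delicate point, but it is a routine computation once the quadratic lower bound is in hand.
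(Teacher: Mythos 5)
Your first reduction (expanding the electrostatic term, dropping $\rho^2 D(\1_\Lambda)$, and bounding the background term by Newton's theorem, $2\rho D(\1_\Omega,\1_\Lambda)\leq 2\pi\rho\,{\rm diam}(\Lambda)^2|\Omega|$) is exactly how the paper starts. But the heart of the lemma is the quantitative lower bound you then need, and your route to it has a genuine gap. The sphere-peeling step does not work as sketched: cutting along $\partial B(a,r)$ costs $-2\cH^2(\Omega\cap\partial B(a,r))$, and since there is no large length scale here (the cores have size of order one), averaging over a window of radii only controls this cost by the mass of $\Omega$ in the corresponding annulus. Summed over the $\sim|\Omega|/m_*$ peels this is of order $|\Omega|$, i.e.\ of the same order as the main term, so your claimed bound degrades to $\Per\Omega+D(\1_\Omega)-\mu_*|\Omega|\gtrsim |\Omega|^2/{\rm diam}(\Lambda)-C|\Omega|$, which only yields $|\Omega|\leq C\,{\rm diam}(\Lambda)+C\rho\,{\rm diam}(\Lambda)^3$. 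The additive term of order ${\rm diam}(\Lambda)$ is fatal: in the application (Proposition~\ref{prop:limit_finite_charge}, where ${\rm diam}(\ell_n\Delta)\to\infty$ while $\rho_n\,{\rm diam}(\ell_n\Delta)^3$ stays bounded) one crucially needs the additive constant to be $O(1)$ so that $|\Omega_n|$ is bounded and Theorem~\ref{thm:decomposition} applies. Your fallback dichotomy is also not a real dichotomy: ``cuts are expensive'' does not imply $\Omega$ is concentrated at scale $|\Omega|^{1/3}$ (think of a diffuse, unit-scale ``sponge'' spread over all of $\Lambda$), and $D(\1_\Omega)\gtrsim|\Omega|^{5/3}$ is false without such concentration — by Riesz rearrangement the ball \emph{maximizes} $D$ at fixed volume, so the only unconditional lower bound is the $|\Omega|^2/{\rm diam}(\Lambda)$ one, which brings you back to the problematic regime.

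The paper avoids all of this by following the Frank--Killip--Nam argument: cut $\Omega$ by a plane $\cP_{\nu,t}$ into $\Omega^\pm_{\nu,t}$, use $\cE_\Lambda[0,\Omega^\pm_{\nu,t}]\geq\mu_*|\Omega^\pm_{\nu,t}|$ on each half, and then \emph{average over all} $\nu\in\bS^2$ and $t\in(-R,R)$. The two exact identities
$$\frac1{4\pi}\int_{\bS^2}\int_{-R}^R \cH^2(\Omega\cap\cP_{\nu,t})\,\dt\,\rd\nu=|\Omega|,\qquad \frac1{4\pi}\int_{\bS^2}\int_{-R}^R D(\1_{\Omega^+_{\nu,t}},\1_{\Omega^-_{\nu,t}})\,\dt\,\rd\nu=\frac{|\Omega|^2}{8}$$
are the key: the probability that a random plane separates $x$ and $y$ is proportional to $|x-y|$, which exactly cancels the Coulomb kernel, so the quadratic gain $|\Omega|^2/4$ carries \emph{no} factor of ${\rm diam}(\Lambda)$, while the cut cost enters only through its exact average $2|\Omega|$. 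Combined with the Newton bound on the background this gives $|\Omega|^2/4-2|\Omega|\leq 4\pi\rho R^3|\Omega|$ and hence the stated $|\Omega|\leq 8+16\pi\rho\,{\rm diam}(\Lambda)^3$. Without this cancellation mechanism (or a substitute for it), your decomposition cannot produce an additive constant independent of ${\rm diam}(\Lambda)$, so the proposal as it stands does not prove the lemma.
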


For $\rho=0$ we just recover the result from~\cite{FraKilNam-16}, of which the lemma is an easy extension. When $\rho>0$ the inequality~\eqref{eq:mass_bound} says that $|\Omega|$ is at most of the same order as the background charge $\rho\,|\Lambda|\approx \rho\ {\rm diam}(\Lambda)^3$.

\begin{proof}
Our proof follows the method in~\cite{FraKilNam-16}. For any $\nu\in\bS^2$ and $t\in\R$, we define the half spaces and the plane
$$\cH^-_{\nu,t}:=\{x\ :\ x\cdot\nu\leq t\},\qquad \cH^+_{\nu,t}:=\{x\ :\ x\cdot\nu> t\},\qquad \cP_{\nu,t}=\{x\ :\ x\cdot\nu=t\}.$$
Upon translating everything we can assume that $\Lambda\subset B(0,R)$ with $R=\text{diam}(\Lambda)$ (in fact, by Jung's theorem~\cite[Thm.~2.10.41]{Federer-96} we could take $R=\text{diam}(\Lambda)\sqrt{3/8}$ which would slightly improve the constant $16\pi$ in the statement of the lemma). Then we know that $\cH^\pm_{\nu,\pm t}\cap\Lambda=\emptyset$ for all $\pm t\geq R$. For every $t\in (-R,R)$, we split $\Omega$ into
$$\Omega=\Omega_{\nu,t}^+\cup \Omega_{\nu,t}^-,\qquad \Omega_{\nu,t}^\pm=\Omega\cap \cH_{\nu,t}^\pm.$$
Using $\cE_\Lambda[0,\Omega^\pm_{\nu,t}]=I[\Omega^\pm_{\nu,t}]\geq\mu_*|\Omega^\pm_{\nu,t}|$ by definition of $\mu_*$ we obtain for almost every $t$ and $\nu$
\begin{align*}
\cE_\Lambda[\rho,\Omega]&=\cE_\Lambda[0,\Omega^+_{\nu,t}]+\cE_\Lambda[0,\Omega^-_{\nu,t}]-2\Per(\Omega\cap\cP_{\nu,t})\\
&\qquad -2\rho D(\1_\Omega,\1_{\Lambda})+\rho^2D(\1_{\Lambda})+2D(\1_{\Omega^+_{\nu,t}},\1_{\Omega^-_{\nu,t}})\\
&\geq \mu_*|\Omega|-2\Per(\Omega\cap\cP_{\nu,t})+2D(\1_{\Omega^+_{\nu,t}},\1_{\Omega^-_{\nu,t}})-2\pi\rho R^2|\Omega|.
\end{align*}
In the last line we have used that
$$\1_\Lambda\ast|\cdot |^{-1}\leq \1_{B(0,R)}\ast|\cdot |^{-1}=\int_{B(0,R)}\frac{\dy}{\max(|x|,|y|)}\leq 2\pi R^2$$
by Newton's theorem as in~\eqref{eq:Newton}.
Next we integrate over $\nu\in\bS^2$ and $t\in(-R,R)$, which gives as in~\cite{FraKilNam-16}
$$\frac1{4\pi}\int_{\bS^2}\int_{-R}^R \Per(\Omega\cap\cP_{\nu,t})\,\rd t\,\rd\nu=|\Omega|$$
and
$$\frac1{4\pi}\int_{\bS^2}\int_{-R}^R D(\1_{\Omega^+_{\nu,t}},\1_{\Omega^-_{\nu,t}})\,\rd t\,\rd \nu=\frac{|\Omega|^2}{8}.$$
Hence we obtain
\begin{equation*}
\frac{|\Omega|^2}4-2|\Omega| \leq 4\pi\rho R^3|\Omega|+2R (\cE_\Lambda[\rho,\Omega]-\mu_*|\Omega|)\leq 4\pi\rho R^3|\Omega|
\end{equation*}
and the result follows after dividing by $|\Omega|$.
\end{proof}

We are now able to provide the

\begin{proof}[Proof of Proposition~\ref{prop:limit_finite_charge}]
The upper bound
\begin{multline}
\limsup_{\rho\to0}\rho^{-\frac13}F_{\ell\Delta}(\rho)\leq \min_{n\geq0}\min_{x_1,...,x_n\in A\Delta}\bigg\{\sum_{1\leq j<k\leq n}\frac{m_*^2}{|x_j-x_k|}\\
- m_*\sum_{j=1}^n\int_{A\Delta}\frac{\dy}{|x_j-y|}+\frac{1}{2}\iint_{A\Delta\times A\Delta}\frac{\dx\,\dy}{|x-y|}\bigg\}
\label{eq:upper-bound-ultra-dilute}
\end{multline}
is shown by placing $n=A^3/m_*+o(A^2)$ copies of a minimizer $\Omega_*$ at points solving the right side, rescaled by a factor $\rho^{1/3}$, and then averaging over rotations of the droplets to simplify the computation of the Coulomb interaction. This works exactly as in Section~\ref{sec:proof_upper_bound}. We omit the details since Theorem~\ref{thm:low_density} only relies on the lower bound.

For the lower bound, we take any sequence $\rho_n\to0$ and call $\Omega_n$ a minimizer for $F_{\ell_n\Delta}(\rho_n)$, with $\ell_n=A\rho_n^{-1/3}$. We know that the right side of~\eqref{eq:upper-bound-ultra-dilute} is non-positive.
Let us recall this argument. For any fixed $n$, the second minimum in~\eqref{eq:upper-bound-ultra-dilute} over the $x_j$'s can be bounded from above by its average, which equals
$$J_n:=\left(\frac{n(n-1)}{2A^6}m_*^2-\frac{m_*n}{A^3}+\frac12\right)\iint_{A\Delta\times A\Delta}\frac{\dx\,\dy}{|x-y|}.$$
Writing $A^3/m_*=N+t=(1-t)N+t(N+1)$ with $N\in\N_0$ and $0\leq t<1$ we can then bound the first minimum by the convex combination
$$(1-t)J_N+tJ_{N+1}=-\frac{N+t^2}{2A^6}m_*^2\iint_{A\Delta\times A\Delta}\frac{\dx\,\dy}{|x-y|}<0.$$
From the upper bound~\eqref{eq:upper-bound-ultra-dilute} we conclude that $F_{\ell_n\Delta}(\rho_n)<0$ for $\rho_n$ small enough. Therefore, by Lemma~\ref{lem:estim_mass} we obtain that
$$|\Omega_n|\leq 8+16\pi \rho_n\ell_n^3{\rm diam}(\Delta)^3=8+16\pi A^3{\rm diam}(\Delta)^3$$
is bounded. Recall that $A$ is fixed in the whole argument. Next, from $F_{\ell_n\Delta}(\rho_n)<0$ and discarding the self-interaction of the background, we obtain
$$I[\Omega_n]-\mu_*|\Omega_n|\leq \rho_n\int_{\Omega_n}\int_{\ell_n\Delta}\frac{\dx\,\dy}{|x-y|}\leq C\rho_n\ell_n^2|\Omega_n|\leq \frac{CA^3}{\ell_n}.$$
We have again used Newton's theorem to estimate the interaction with the background as we did already in~\eqref{eq:Newton}.
We are thus exactly in the setting of Theorem~\ref{thm:decomposition} and Corollary~\ref{cor:repulsion_ell_n}. We know that we can decompose
$$\Omega_n=\bigcup_{j=1}^K\Omega_n^{(j)}\cup \Omega_n^o,\qquad |\Omega_n^o|+\Per(\Omega_n^o)\to0$$
with $\Omega_n^{(j)}=\Omega_n\cap B(a_n^{(j)},R)$, $\Omega_n^{(j)}-a_n^{(j)}\to \Omega_*^{(j)}$ globally (that is, in $L^1$ for the corresponding characteristic functions) which are minimizers for $\mu_*$ of mass $m_*^{(j)}:=|\Omega_*^{(j)}|\in[m_{**},m_*]$ and with the centers satisfying
$a_n^{(j)}/\ell_n\to y_j$, for $K$ distinct points in $\overline{\Delta}$. From~\eqref{eq:repulsion2} we have
$$I[\Omega_n]\geq \mu_*|\Omega_n|+\frac1{\ell_n}\sum_{1\leq j< k\leq K}\frac{m_*^{(j)}m_*^{(k)}+o(1)}{|y_j-y_k|}.$$
Here, it is understood that the second term is absent for $K=1$.
For the interaction with the background we use~\eqref{eq:Newton} again and obtain
\begin{multline*}
\rho_n\int_{\Omega_n^{(j)}}\int_{\ell_n\Delta}\frac{\dx\,\dy}{|x-y|}\\
\leq \rho_n\int_{a_n^{(j)}+\Omega_*^{(j)}}\int_{\ell_n\Delta}\frac{\dx\,\dy}{|x-y|}+\frac{CA^3}{\ell_n}\int\Big|\1_{(\Omega_n^{(j)}-a_n^{(j)})}-\1_{\Omega_*^{(j)}}\Big|.
\end{multline*}
The last term is a $o(1/\ell_n)$. Multiplying by $\ell_n$ we find after a change of variables that the first term behaves as
\begin{align*}
\ell_n\rho_n\int_{a_n^{(j)}+\Omega_*^{(j)}}\int_{\ell_n\Delta}\frac{\dx\,\dy}{|x-y|}&=A^3\ell_n^3\int_{\frac{a_n^{(j)}+\Omega_*^{(j)}}{\ell_n}}\1_{\Delta}\ast|\cdot|^{-1}(y)\,\dy\\
&\to A^3m_*^{(j)}\1_{\Delta}\ast|\cdot|^{-1}(y_j)
\end{align*}
where we used that $\1_{\Delta}\ast|\cdot|^{-1}$ is a continuous function and that
$$\ell_n^3\1_{\frac{a_n^{(j)}+\Omega_*^{(j)}}{\ell_n}}\wto m_*^{(j)}\delta_{y_j}$$
in the sense of measures. For the interaction with $\Omega_n^o$ we have by~\eqref{eq:Newton}
$$\rho_n\int_{\Omega_n^o}\int_{\ell_n\Delta}\frac{\dx\,\dy}{|x-y|}\leq C\rho_n\ell_n^2|\Omega_n^o|=CA^3\frac{|\Omega_n^o|}{\ell_n}=o\left(\frac1{\ell_n}\right).$$
Finally, we have for the self-interaction of the background
$$\frac{\rho_n^2}{2}\iint_{(\ell_n\Delta)^2}\frac{\dx\,\dy}{|x-y|}=\frac{A^6}{2\ell_n}\iint_{\Delta^2}\frac{\dx\,\dy}{|x-y|}.$$
Our conclusion is that
\begin{multline*}
\liminf_{n\to\ii}\ell_n\,F_{\ell_n\Delta}(\rho_n)\geq\sum_{1\leq j< k\leq K}\frac{m_*^{(j)}m_*^{(k)}}{|y_j-y_k|}-A^3\sum_{j=1}^Km_*^{(j)}\int_{\Delta}\frac{\dy}{|y_j-y|}\\
+\frac{A^6}{2}\iint_{\Delta^2}\frac{\dx\,\dy}{|x-y|}.
\end{multline*}
Next we remark as in~\cite{DysLen-68,DauLie-83} that the right side is linear with respect to each $m_*^{(j)}$, when all the other ones are fixed. We can thus replace all the $m_*^{(j)}$ by either $0$ or the maximal value $m_*$ and decrease the right side. When they are replaced by $0$ this is equivalent to discarding the corresponding points $y_j$, that is, to decrease $K$. Relabelling the points $y_j$ so that the $K'$ first end up having mass $m_*$, we find that
\begin{multline*}
\liminf_{n\to\ii}\ell_n\,F_{\ell_n\Delta}(\rho_n)\geq m_*^2\sum_{1\leq j< k\leq K'}\frac{1}{|y_j-y_k|}-A^3m_*\sum_{j=1}^{K'}\int_{\Delta}\frac{\dy}{|y_j-y|}\\
+\frac{A^6}{2}\iint_{\Delta^2}\frac{\dx\,\dy}{|x-y|}.
\end{multline*}
After rescaling by $A$ this is the claimed result in~\eqref{eq:to_be_proved}.
\end{proof}

\subsection*{Step 3. Conclusion of the proof}
Inserting the limit~\eqref{eq:to_be_proved} in~\eqref{eq:lower_simplex_bis} we find
\begin{multline*}
\liminf_{\rho\to0}\frac{e(\rho)-\mu_*\rho}{\rho^{4/3}}\geq A^{-3}\min_{n\geq0}\min_{x_1,...,x_n\in A\Delta}\bigg\{\sum_{1\leq j<k\leq n}\frac{m_*^2}{|x_j-x_k|}\\
- m_*\sum_{j=1}^n\int_{A\Delta}\frac{\dy}{|x_j-y|}+\frac{1}{2}\iint_{A\Delta\times A\Delta}\frac{\dx\,\dy}{|x-y|}\bigg\}-\frac{C}{A}.
\end{multline*}
Here we can rescale everything by a factor $m_*^{1/3}$ and, after passing to the limit $A\to\ii$, we obtain the Jellium energy
$$\liminf_{\rho\to0}\frac{e(\rho)-\mu_*\rho}{\rho^{4/3}}\geq m_*^{2/3}e_\text{Jel}$$
as we wanted. The grand-canonical Jellium problem has the same thermodynamic limit as its canonical counterpart in~\eqref{eq:Jellium} by~\cite[Thm.~24]{Lewin-22}. This concludes the proof of Theorem~\ref{thm:low_density}.\qed

\appendix\section{Proof of Theorem~\ref{thm:thermo_limit} on the thermodynamic limit}\label{app:proof_thermo-limit}

In this appendix we describe how to prove the existence of the thermodynamic limit stated in~\eqref{eq:thermo_limit}, using the method introduced by Lieb, Lebowitz and Narnhofer in~\cite{LieLeb-72,LieNar-75}. It is also presented  in~\cite[Chap. 14]{LieSei-09}.

The proof is, as usual, divided into three steps. The first step consists of establishing the existence of the limit for a special sequence of balls $B_j$ growing exponentially fast. This is done by packing any $B_K$ by a certain disjoint union of copies of the smaller balls $B_0,...,B_{K-1}$ in such a way that the remaining space has a volume negligible compared with the large volume $|B_K|$. The existence of such packings was proved first in~\cite{LieLeb-72} and has become known as a ``swiss cheese theorem''. In the next two steps we use this special sequence to obtain the limit for any sequence of sets $\Lambda_n\nearrow\R^3$, satisfying suitable regularity conditions. Upper and lower bounds are proved using a similar swiss cheese for the inside or the outside of $\Lambda_n$.

\subsection{Construction of a quadrupole layer}
We start by presenting a technical result that will be used in each of the three steps. It will allows us to deal with the neighborhood of the boundary of the sets we will manipulate. Recall that for the slowly-decaying Coulomb potential, a small charge disturbance can have a large effect far away. When constructing trial sets, it is therefore important to ensure that the charges locally balance well. Local neutrality is not enough and in the following proposition we will achieve an additional vanishing dipole condition. Handling this situation for the liquid drop model is harder than what was previously done in~\cite{LieLeb-72,LieNar-75} and this is our main contribution in this appendix.

\begin{proposition}[Quadrupole layer]\label{prop:dipole_layer}
Let $\rho\leq1/2$. Let $\Lambda\subset\R^3$ be an $(r,L)$--Lipschitz open set with compact boundary. Let $C_z=\eps z+(-\eps/2,\eps/2)^3$ be a tiling of cubes of size $\eps>0$, with $z\in\Z^3$. Define the outer cubic boundary layer of $\Lambda$ by
$$\cB:=\bigcup_{\rd(C_z,\partial\Lambda)\leq \eps}\overline{C_z\setminus \Lambda}.$$
If $\eps\leq r/C$, then we can write $\cB$ as a disjoint (up to a negligible set) union
$$\cB=\bigcup_{\alpha=1}^{A} \overline{\Lambda_\alpha},\qquad A\leq C\eps^{-3}\left|\big\{x\in \R^3\setminus \Lambda\ :\ \rd(x,\partial\Lambda)\leq C\eps\big\}\right|.$$
Each set $\Lambda_\alpha$ contains a ball of radius $\eps/C$ and is contained in a ball of radius $C\eps$. It also contains a set $\Omega_\alpha$ satisfying
\begin{equation}
|\Omega_\alpha|=\rho|\Lambda_\alpha|,\qquad \Per(\Omega_\alpha)\leq C\rho^{\frac23}\eps^2,\qquad \int_{\Omega_\alpha}y\,\dy=\rho\int_{\Lambda_\alpha}y\,\dy.
\label{eq:local_dipoles}
\end{equation}
The constant $C$ only depends on the Lipschitz constant $L$.
\end{proposition}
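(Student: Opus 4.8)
The plan is to treat the statement as two essentially independent tasks. The first is a purely geometric decomposition of the cubic boundary layer $\cB$ into finitely many pieces $\Lambda_\alpha$ that are \emph{chunky} (each sandwiched between balls of radii $\eps/C$ and $C\eps$, with $C$ depending only on $L$) and, in addition, \emph{star-shaped with respect to their own centroid} $b_\alpha:=|\Lambda_\alpha|^{-1}\int_{\Lambda_\alpha}y\,\dy$. The second is the construction, inside each such $\Lambda_\alpha$, of the ``quadrupole-neutralising'' subset $\Omega_\alpha$ obeying \eqref{eq:local_dipoles}. The second task becomes very short once the first is done, so the bulk of the work — and the main obstacle — lies in the decomposition.

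For the subset $\Omega_\alpha$, given a good piece $\Lambda_\alpha$ with centroid $b_\alpha$, I would take the homothety of $\Lambda_\alpha$ about $b_\alpha$ with ratio $\rho^{1/3}\le 1$, namely
$$\Omega_\alpha:=b_\alpha+\rho^{1/3}\big(\Lambda_\alpha-b_\alpha\big).$$
Star-shapedness of $\Lambda_\alpha$ with respect to $b_\alpha$ gives $\Omega_\alpha\subset\Lambda_\alpha$; the homothety scales volume by $(\rho^{1/3})^3$ and perimeter by $(\rho^{1/3})^2$, so $|\Omega_\alpha|=\rho|\Lambda_\alpha|$ and $\Per(\Omega_\alpha)=\rho^{2/3}\Per(\Lambda_\alpha)\le C\rho^{2/3}\eps^2$ once we know $\Per(\Lambda_\alpha)\le C\eps^2$ (true for a cube, and for the mildly deformed pieces described below); finally the centroid of a centred rescaling is unchanged, so $\int_{\Omega_\alpha}y\,\dy=|\Omega_\alpha|\,b_\alpha=\rho|\Lambda_\alpha|\,b_\alpha=\rho\int_{\Lambda_\alpha}y\,\dy$. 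This is exactly \eqref{eq:local_dipoles}; the vanishing-dipole (``quadrupole'') condition comes for free precisely because a centred homothety preserves the centre of mass.

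For the decomposition of $\cB$, I would keep as their own pieces all cubes $C_z$ lying entirely in $\R^3\setminus\overline\Lambda$ within distance $\eps$ of $\partial\Lambda$: these are convex and chunky, hence star-shaped about their centre. For a cube $C_z$ straddling $\partial\Lambda$, the hypothesis $\eps\le r/C$ places a neighbourhood of $C_z$ inside a single Lipschitz chart in which $\Lambda$ is the epigraph of an $L$-Lipschitz function; there I would cut $C_z\setminus\overline\Lambda$ into ``tall thin columns'' whose footprint in the chart has diameter of order $\eps/M$, with $M=M(L)$ large enough that the Lipschitz graph oscillates by at most $L\eps\sqrt3/M$ over each such footprint — much less than the column's vertical extent. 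Columns that turn out too thin (because the graph nearly clips them) get absorbed into a neighbouring full column or full cube. One then checks that every resulting $\Lambda_\alpha$ contains a ball of radius $\eps/C$, lies in a ball of radius $C\eps$, has perimeter $\le C\eps^2$, and — because its centroid sits at distance $\ge\eps/C$ from each of its faces while the only curved face wiggles by $\ll\eps$ — is star-shaped with respect to that centroid. The bound $A\le C\eps^{-3}\,|\{x\notin\Lambda:\rd(x,\partial\Lambda)\le C\eps\}|$ follows since the $\Lambda_\alpha$ are pairwise disjoint (up to null sets), contained in that collar of $\partial\Lambda$, and each has volume $\ge(\eps/C)^3$.

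The main obstacle is this geometric decomposition near $\partial\Lambda$: one must produce pieces that are simultaneously disjoint, exactly exhausting $\cB$ up to a null set, uniformly chunky, few in number, and star-shaped about their own centroids, while the Lipschitz boundary is in general misaligned with the cube grid and may be steep. The delicate points are the choice of the sub-scale $M(L)$ — so that the curved face of each near-boundary piece is nearly flat relative to the piece's thickness, which is exactly what yields star-shapedness about the centroid rather than merely about some interior point — and the merging rule for thin slivers, so that no piece degenerates and the volume lower bound (hence the count bound) survives. Everything downstream, namely the homothety construction of $\Omega_\alpha$ and the perimeter/centroid bookkeeping, is then routine.
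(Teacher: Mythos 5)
Your overall architecture (keep full exterior cubes as their own pieces, subdivide the cubes straddling $\partial\Lambda$, merge degenerate slivers into healthier neighbours, then neutralise charge and dipole inside each piece) is the same as the paper's, and your centred-homothety choice $\Omega_\alpha:=b_\alpha+\rho^{1/3}(\Lambda_\alpha-b_\alpha)$ is a nice way to get the vanishing dipole for free. But there is a genuine gap exactly at the step you flag as the main obstacle: the merging rule destroys the star-shapedness on which the homothety construction relies. If $\partial\Lambda$ clips a cube $C_z$ shallowly, then \emph{every} column of $C_z\setminus\overline\Lambda$ is short: by the $L$--Lipschitz bound, adjacent columns have heights differing by at most $CL\eps/M$, so a ``too thin'' column cannot be absorbed into a neighbouring \emph{full} column inside the same cube --- its neighbours are thin too, and the whole exterior portion of that cube can have thickness arbitrarily small compared with $\eps$ (so it contains no ball of radius $\eps/C(L)$ either). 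The only available rescue is to attach such slivers to a full exterior cube, which the cone property of Lipschitz domains places at distance up to $C\eps$ away. The resulting $\Lambda_\alpha$ is then an L-shaped or even disconnected union of a fat cube and a thin sliver pressed against the graph; it is not star-shaped about its centroid, and the shrunk copy $b_\alpha+\rho^{1/3}(\Lambda_\alpha-b_\alpha)$ of the sliver is lifted towards the centroid and typically lands above the graph, i.e.\ inside $\Lambda$ and outside $\Lambda_\alpha$. So the containment $\Omega_\alpha\subset\Lambda_\alpha$, and with it \eqref{eq:local_dipoles}, fails for precisely the unavoidable merged pieces; your justification (``the centroid sits at distance $\geq\eps/C$ from each face while the only curved face wiggles by $\ll\eps$'') does not apply to them.

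The paper avoids star-shapedness altogether and this is how the gap can be closed. Straddling cubes are subdivided at the finer scale $\eps/K$ with $K=K(L)$ large; each small partial cube $C'_{z'}\setminus\Lambda$ is attached to the \emph{nearest} full exterior cube $C_z$, which lies within distance $C\eps$ by the cone property, and the Lipschitz area bound shows at most $CK^2$ small cubes attach to a given $C_z$, so the attached volume is $\leq C\eps^3/K$. Consequently the centroid of the merged set $\Lambda_\alpha$ is displaced from the centre of $C_z$ by only $C\eps/K$, and one places as $\Omega_\alpha$ a \emph{cube} of volume $\rho|\Lambda_\alpha|$ centred at that centroid: since $\rho\leq1/2$ and $K$ is large (depending only on $L$), this cube fits entirely inside the full cube $C_z\subset\Lambda_\alpha$, which gives \eqref{eq:local_dipoles} without any star-shapedness. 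Your homothety works fine for the unmerged convex pieces (there it reproduces the paper's centred small cube up to cosmetics), but for the merged pieces you need this quantitative ``small attached volume, centroid barely moves'' mechanism --- or some substitute for it --- and your proposal does not supply one.
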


Recall that an open set $\Lambda\subset\R^3$ is $(r,L)$--Lipschitz  if for any $x\in\partial\Lambda$, the set $\Lambda\cap B(x,r)$ is, in an appropriate coordinate system, equal to the epigraph of an $L$--Lipschitz function. The interpretation of the condition~\eqref{eq:local_dipoles} is that the charge distribution $\1_{\Omega_\alpha}-\rho\1_{\Lambda_\alpha}$ has no charge and no dipole moment. It can be thought as a small quadrupole, hence the title of the proposition. Recall that in our situation the quadrupole moment is the $3\times3$ symmetric trace-less matrix
\begin{equation}
Q_\alpha=\int\left(yy^T-\frac{|y|^2}3 I_3\right)(\1_{\Omega_\alpha}-\rho\1_{\Lambda_\alpha})(y)\,\dy.
\label{eq:quadrupole}
\end{equation}
It would be interesting to make the latter vanish as well.

\begin{figure}[t]
\includegraphics[width=8cm]{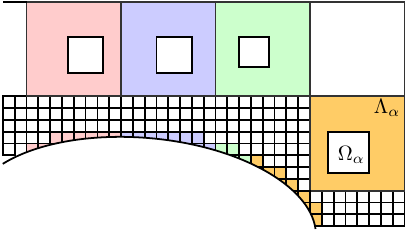}
\caption{Illustration of the proof of Proposition~\ref{prop:dipole_layer}. The small white cubes that are full are the $\Lambda_\alpha$'s of the first type. The large cubes on the upper right are the $\Lambda_\alpha$'s of the second type. The colored large cubes are the ones merged with the partial small cubes of the same color that constitute the $\Lambda_\alpha$'s of the third type. In those cubes, $\Omega_\alpha$ is chosen to be a cube of volume $\rho|\Lambda_\alpha|$, placed at the center of mass of $\Lambda_\alpha$. The latter is at distance $O(\eps/K)$ to the center of the large cube. \label{fig:dipole_layer}}
\end{figure}

\begin{proof}
We call $\cZ$ the set of indices $z$ such that $C_z$ intersects the complement of $\Lambda$ and $\rd(C_z,\partial\Lambda)\leq\eps$, so that $\cB=\cup_{z\in\cZ}\overline{C_z\setminus \Lambda}$. We call $\cZ_i\subset\cZ$ the indices such that the cube $C_z$ is contained in the complement of $\overline\Lambda$ and $\cZ_b:=\cZ\setminus \cZ_i$ the indices so that the cube $C_z$ intersects the boundary $\partial \Lambda$.

Next, we split each cube $C_z$ intersecting the boundary into many smaller cubes of side length $\eps/K$ with $K\in\N$ a large enough integer to be determined later. We denote the associated tiling by
$$C'_{z'}=\frac{\eps}{K}\big(z'+(-1/2,1/2)^3\big),\qquad z'\in\Z^3.$$
Any of the old cubes is a union of $K^3$ small cubes. We introduce the corresponding sets of indices
$$\cZ'_i=\left\{z'\in\Z^d\ :\ C'_{z'}\subset \cup_{z\in\cZ_b}C_z,\quad C'_{z'}\subset \R^3\setminus \overline\Lambda\right\},$$
$$\cZ'_b=\left\{z'\in\Z^d\ :\ C'_{z'}\subset \cup_{z\in\cZ_b}C_z,\quad C'_{z'}\cap \partial \Lambda\neq\emptyset\right\}.$$
Now, to each $z'\in\cZ'_b$ (a \emph{small} cube intersecting the boundary $\partial\Lambda$) we associate a unique index $z\in \cZ_i$ (a \emph{large} cube not intersecting the boundary $\partial\Lambda$). We simply take $z$ such that the distance between $C_z$ and $C'_{z'}$ is minimal (a choice has to be made in case there are several possibilities). The sets $\Lambda_\alpha$ in the statement then consist of three different types. First we have the collection of cubes $C'_{z'}$ with $z'\in\cZ'_i$. In any such cube we take $\Omega_\alpha=(\eps/K) z'+\rho^{1/3}C'_{0}$. The properties~\eqref{eq:local_dipoles} follow immediately. The next type of sets are those large cubes $C_z$ with $z\in\cZ_i$ that are not associated to any small cube $C'_{z'}$ at all. We again pick $\Omega_\alpha=\eps z+\rho^{1/3}C_{0}$ and obtain the desired properties. In these two cases we in fact also have a vanishing quadrupole $Q_\alpha=0$.

Finally we come to the most interesting type of $\Lambda_\alpha$. We define each remaining $\Lambda_\alpha$ as the union of one $C_z$ with $z\in \cZ_i$ together with all the small partial cubes $C'_{z'}\setminus\Lambda$ associated to it. The picture is therefore that we have a large full cube of volume $\eps^3$ together with several small partial cubes of volume $\leq \eps^3/K^3$ (Figure~\ref{fig:dipole_layer}). The $(r,L)$--Lipschitz property implies the following two facts, for $\eps\leq r/C$:
\begin{enumerate}
\item[(a)] The distance of any small cube with its associated large cube is bounded by $C\eps$.
\item[(b)] The number of small cubes associated with a given large cube is bounded by $CK^2$.
\end{enumerate}
The constant $C$ only depends on $L$. The first claim (a) follows for instance from the fact that at any point $x\in\partial\Lambda$ there is a cone of height comparable with $r$ and opening angle bounded from below in terms of $L$, that is completely contained in $\R^3\setminus \Lambda$ (see for instance~\cite[Thm.~1.2.2.2]{Grisvard}). Hence for $\eps\leq r/C$ this cone will contain a full cube $C_z$. For the second claim (b), we use that the $\eta$--thickened boundary inside a ball of radius $r'$ satisfies
$$\left|\big(\partial\Lambda+B(0,\eta)\big)\cap B(x,r')\right|\leq C(r')^2\eta$$
for all $r',\eta\leq r$ and $x\in\partial\Lambda$. Taking $r'$ of the order $\eps$ and $\eta=\sqrt{3}\eps/K$, we obtain (b).

Next we construct the set $\Omega_\alpha$. Our idea is that when $K$ is large enough, the center of mass of $\Lambda_\alpha$ will be very close to that of $C_z$ (that is, $\eps z$). This allows us to place a cube $\Omega_\alpha$ of the appropriate volume fully included in the large cube $C_z\subset\Lambda_\alpha$, about its center of mass, so as to cancel the dipole moment. The center of mass of $\Lambda_\alpha$ is given by
$$\frac1{|\Lambda_\alpha|}\int_{\Lambda_\alpha}y\,\dy=\frac{\eps^3}{\eps^3+|U|}\eps z+\frac1{\eps^3+|U|}\int_U y\,\dy=\eps z+\frac1{\eps^3+|U|}\int_U (y-\eps z)\,\dy$$
where $U:=\Lambda_\alpha\setminus C_z$ contains the partial small cubes. Using (a) we can estimate the shift by
$$\left||\Lambda_\alpha|^{-1}\int_{\Lambda_\alpha}y\,\dy-\eps z\right|\leq \frac1{\eps^3+|U|}\int_U |y-\eps z|\,\dy\leq C\frac{\eps|U|}{\eps^3+|U|}.$$
By (b) we have $|U|\leq C\eps^3/K$. Therefore we obtain
$$\left||\Lambda_\alpha|^{-1}\int_{\Lambda_\alpha}y\,\dy-\eps z\right|\leq C\frac{\eps}{K+1}.$$
This means that the center of mass of $\Lambda_\alpha$ is located at a distance of order $\eps/K$ from the center of mass $\eps z$ of $C_z$. We can now place a cube of the appropriate volume centered at this point
$$\Omega_\alpha:=\frac1{|\Lambda_\alpha|}\int_{\Lambda_\alpha}y\,\dy+\frac{\left(\rho|\Lambda_\alpha|\right)^{\frac13}}{\eps}C_0.$$
This cube has side length
$$\eps\rho^{\frac13}\left(1+\frac{|U|}{\eps^3}\right)^{\frac13}\leq \eps\rho^{\frac13}\left(1+\frac{C}{K}\right)^{\frac13}.$$
Since $\rho\leq1/2$, this set is fully included into $C_z$ for $K$ larger than a constant depending only on $L$. As a conclusion we have constructed the desired sets. By definition each $\Lambda_\alpha$ contains at least one small cube $C'_{z'}$. They are all disjoint and cover $\cB$ which is included in $V=\{x\in \R^3\setminus \Lambda\ :\ \rd(x,\partial\Lambda)\leq C\eps\}$. It follows that there are at most $K^3\eps^{-3}|V|$ such sets.
\end{proof}

\subsection{Limit for a sequence of exponentially growing balls}
We now follow~\cite{LieLeb-72,LieNar-75} and introduce
$$R_j=(1+p)^j(1-\theta^j/2),\qquad n_j=\frac{\gamma^j(1+p)^{3j}}p$$
where
$$p=26,\qquad \gamma=\frac{p}{1+p},\qquad \theta=\frac1{1+p}.$$
Let $B_k:=B(0,R_k)$. By~\cite[Lemma 4.1]{LieLeb-72}, one can pack the ball $B_K$ by a disjoint union of $n_{K-j}$ translated copies of the $B_j$'s with $j=0,...,K-1$, in such a way that the mutual distance between all the balls, including $\partial B_K$, is at least equal to $\frac{p}{2(1+p)}$ (Figure~\ref{fig:packing_ball}). A simple calculation shows that the total occupied volume satisfies
\begin{equation}
\frac{\gamma^K|B_K|}{C}\leq |B_K|-\sum_{j=0}^{K-1}n_{K-j}|B_j|\leq C\gamma^K|B_K|,
\label{eq:filling_balls}
\end{equation}
that is, the volume fraction of the remaining part is exponentially small. A similar estimate provides for the perimeter
\begin{multline}
\frac{\gamma^K|B_K|}{C}\leq \sum_{j=0}^{K-1}n_{K-j}\Per(B_j)=4\pi\sum_{j=0}^{K-1}n_{K-j}R_j^2\leq C\gamma^K|B_K|,\\
\Per(B_K)\leq C\theta^K|B_K|.
\label{eq:estim_perimeter}
\end{multline}
The perimeter of the remaining part is thus comparable to its volume. Everywhere $C$ can be chosen universal.

\begin{figure}[t]
\includegraphics[width=8cm]{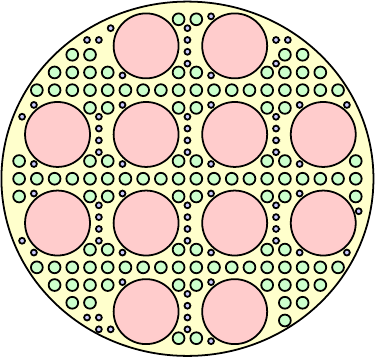}
\caption{Packing the large ball $B_K$ with smaller balls $B_j$ as in~\cite{LieLeb-72,LieNar-75} so that the uncovered part (in yellow in the picture) has an exponentially small fraction of the total volume $|B_K|$.\label{fig:packing_ball}}
\end{figure}

Let $\Omega_j$ be a minimizer for the problem $E_{B_j}(\rho)$ in the ball $B_j=B(0,R_j)$. We call $x_{j,n}$ the centers of the packing balls of radii $R_j$ that are placed in the large ball $B_K$. In each of these balls, we place the trial set $\Omega_{j,n}:=x_{j,n}+U_{j,n}\Omega_j$ with a rotation $U_{j,n}\in SO(3)$. When we average over $U_{j,n}$ the Coulomb potential induced by $\1_{\Omega_{j,n}}-\rho\1_{B(x_{j,n},R_j)}$ vanishes outside of the corresponding ball $B(x_{j,n},R_j)$ by Newton's theorem as in~\eqref{eq:Newton_averaged}. This is due to the neutrality condition $|\Omega_j|=\rho|B_j|$. The Coulomb interaction between this ball and the rest of the system thus vanishes. We do this for all the small balls.

Next we discuss how to deal with the part in $B_K$ that is not covered by the balls:
\begin{equation}
D_K:=B_K\setminus \bigcup_{j=0}^{K-1}\bigcup_{n=1}^{n_{K-j}}B(x_{j,n},R_j).
\label{eq:cheese}
\end{equation}
We have by~\eqref{eq:filling_balls} and~\eqref{eq:estim_perimeter}
\begin{equation}
\frac{\gamma^K|B_K|}C\leq \Per(D_K)\leq C\gamma^K|B_K|,\quad \frac{\gamma^K|B_K|}C\leq |D_K|\leq C\gamma^K|B_K|.
\label{eq:estim_D_K_upper_lower}
\end{equation}
Like in~\cite{LieNar-75} (and unlike in~\cite{LieLeb-72}), we cannot leave $D_K$ empty because the Coulomb energy of the background $\rho\1_{D_K}$ is too large. Whatever $\Omega$ we put in $D_K$ will not interact with the balls, since those generate a Coulomb potential that exactly vanishes outside of the balls, as we have already explained. Hence after taking a minimizing $\Omega\subset D_K$ we obtain the bound
$$E_{B_K}(\rho)\leq \sum_{j=0}^{K-1}n_{K-j}E_{B_j}(\rho)+E_{D_K}(\rho).$$
Using $(1-\theta^j/2)^3\leq (1-\theta^K/2)^3$ since $\theta\leq1$ we get
$$\frac{E_{B_K}(\rho)}{|B_K|}\leq \frac1p\sum_{j=0}^{K-1}\gamma^{K-j}\frac{E_{B_j}(\rho)}{|B_j|}+\frac{E_{D_K}(\rho)}{|B_K|}.$$
We claim that
\begin{equation}
\boxed{E_{D_K}(\rho)\leq C|D_K|\leq C\gamma^K|B_K|.}
\label{eq:estim_cheese}
\end{equation}
(recall that the two terms on the right side are comparable by~\eqref{eq:estim_D_K_upper_lower}).
It follows from~\eqref{eq:estim_cheese} that the limit
$$e(\rho):=\lim_{k\to\ii}\frac{E_{B_K}(\rho)}{|B_K|}$$
exists. This is a consequence of the following lemma, which is essentially contained in~\cite[p.~343]{LieLeb-72} (see also~\cite[p.~264]{LieSei-09}) and whose proof we give for the convenience of the reader.

\begin{lemma}
Let $f_j$ be a sequence of real numbers that is bounded from below. Let $0<\gamma<1$ and assume that there exists a summable sequence $\delta_j\geq0$ such that
$$f_K\leq \frac{1-\gamma}{\gamma}\sum_{j=0}^{K-1}\gamma^{K-j} f_{j}+\delta_K.$$
Then $f_K$ admits a limit as $K\to\ii$.
\end{lemma}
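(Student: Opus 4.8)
The plan is to recognize the right-hand side of the hypothesis as (essentially) an exponential moving average of the $f_j$'s and to exploit the fact that such an average obeys a trivial one-step recursion. Concretely, I would set $A_0:=0$ and, for $K\geq1$,
$$A_K:=(1-\gamma)\sum_{j=0}^{K-1}\gamma^{K-1-j}f_j=\frac{1-\gamma}{\gamma}\sum_{j=0}^{K-1}\gamma^{K-j}f_j,$$
so that the hypothesis reads simply $f_K\leq A_K+\delta_K$ for all $K\geq1$. A one-line computation, isolating the $j=K$ term, shows that $A_{K+1}=\gamma A_K+(1-\gamma)f_K$.

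The key step is to feed the inequality into the recursion. Since $\delta_K\geq0$,
$$A_{K+1}=\gamma A_K+(1-\gamma)f_K\leq\gamma A_K+(1-\gamma)(A_K+\delta_K)=A_K+(1-\gamma)\delta_K,$$
so the corrected sequence $\widetilde A_K:=A_K-(1-\gamma)\sum_{j=1}^{K-1}\delta_j$ is non-increasing. I would then note that it is bounded below: with $\ell:=\inf_j f_j>-\infty$, the weights defining $A_K$ are nonnegative and sum to $1-\gamma^K\leq1$, whence $A_K\geq\min(\ell,0)$, and summability of $(\delta_j)$ bounds the subtracted partial sums. A non-increasing sequence bounded below converges, so $\widetilde A_K$, and therefore $A_K$, has a limit, say $A_K\to\alpha$.

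Finally I would recover the convergence of $f_K$ itself by inverting the recursion: $f_K=(1-\gamma)^{-1}(A_{K+1}-\gamma A_K)\to(1-\gamma)^{-1}(\alpha-\gamma\alpha)=\alpha$ as $K\to\infty$, which is the claim (and the limit of $f_K$ coincides, as expected, with that of its own moving average).

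I do not foresee a genuine obstacle; the argument is elementary, and the only care needed is bookkeeping — matching the constant $\frac{1-\gamma}{\gamma}$ in the hypothesis with the weights $(1-\gamma)\gamma^{K-1-j}$ of the moving average, and realizing that it is the sign $\delta_K\geq0$ (not any smallness of $\delta_K$) that upgrades "$f_K$ is controlled by its own past" into honest monotonicity of $\widetilde A_K$. It is worth remarking that no upper bound on $(f_K)$ need be assumed: boundedness from below is used only to keep $A_K$, hence $\widetilde A_K$, bounded below, and an a posteriori upper bound on $f_K$ then follows from $f_K\leq A_K+\delta_K$.
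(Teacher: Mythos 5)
Your proof is correct. It rests on the same algebraic backbone as the paper's argument: the paper sets $g_K:=f_K-\frac{1-\gamma}{\gamma}\sum_{j<K}\gamma^{K-j}f_j\leq\delta_K$ and solves the recursion to write $f_K=g_K+(1-\gamma)\sum_{j<K}g_j$, which in your notation is exactly the telescoped form of $A_{K+1}-A_K=(1-\gamma)(f_K-A_K)=(1-\gamma)g_K$. Where the two arguments genuinely diverge is in how convergence is extracted. The paper proves that $\sum_j g_j$ converges \emph{absolutely}: $(g_j)_+\leq\delta_j$ is summable, and the lower bound on $f$ (after first deducing an upper bound) forces $\sum_j(g_j)_-<\infty$; then $g_K\to0$ and $f_K\to(1-\gamma)\sum_j g_j$. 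You instead track the moving average $A_K$ itself, observe that $A_K-(1-\gamma)\sum_{j<K}\delta_j$ is non-increasing (this is just the statement $g_K\leq\delta_K$) and bounded below, and then recover $f_K=(1-\gamma)^{-1}(A_{K+1}-\gamma A_K)$ from the exact one-step recursion. Your packaging buys a few simplifications: you get the lower bound on $A_K$ directly from the nonnegative weights summing to $1-\gamma^K\leq1$, rather than via the paper's chain (hypothesis $\Rightarrow$ $f$ bounded above $\Rightarrow$ $g$ bounded below $\Rightarrow$ $\sum(g_j)_-<\infty$), and you never need absolute summability or the positive/negative part bookkeeping, since $g_K\to0$ comes for free from $A_{K+1}-A_K\to0$. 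What the paper's version gives in exchange is the explicit representation of the limit as $(1-\gamma)\sum_{j\geq0}g_j$, though of course your $\alpha$ is the same number. Your closing remark that no a priori upper bound on $f_K$ is needed also matches the paper, where the upper bound is likewise derived rather than assumed.
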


\begin{proof}
Let us introduce $g_K:=f_K-\frac{1-\gamma}\gamma\sum_{j=0}^{K-1}\gamma^{K-j} f_{j}$ which satisfies $g_K\leq \delta_K$. Solving the recursion relation we find
\begin{equation}
f_K=g_K+(1-\gamma)\sum_{j=0}^{K-1}g_j.
\label{eq:f_K_g_K}
\end{equation}
Using that $g_j\leq \delta_j$ which is summable by assumption, we infer that $f_j$ is bounded from above. Inserting this into the definition of $g_K$ we conclude that $g_K$ is bounded from below. The assumption that $f_K$ is bounded from below then implies that the sum $\sum_{j=0}^\ii (g_j)_-$ is finite. Hence by summability of $(g_j)_+\leq\delta_j$ we infer that $g_j$ is summable. Then $g_K\to0$ and thus $f_K$ converges to $(1-\gamma)\sum_{j=0}^\ii g_j$, by~\eqref{eq:f_K_g_K}.
\end{proof}

If therefore only remains to explain how to show~\eqref{eq:estim_cheese}. The construction of a competitor $\Omega$ for $E_{D_K}(\rho)$ is the main difficulty, compared with what was done in~\cite{LieNar-75}. It is harder to screen the background in the liquid drop model because $\1_\Omega$ has a fixed height. Our idea is to split $D_K$ into many small boxes of size $\eps$ and place in each full cube a smaller cube of side length $\rho^{1/3}\eps$. Of course the main difficulty is to deal with the cubes intersecting the boundary of $D_K$ and this is where Proposition~\ref{prop:dipole_layer} becomes useful.

We consider the previous tiling composed of cubes of side length $\eps$, $C_z:=\eps z+(-\eps/2,\eps/2)^3$ with $z\in\Z^3$. We choose $\eps>0$ small enough so as to be able to apply Proposition~\ref{prop:dipole_layer} to $\Lambda=B(x_{j,n},R_j)$ as well as to $\Lambda=\R^3\setminus B(0,R_K)$. We call $\cB_{j,n}$ and $\cB_K$ the corresponding outer cubic layers given by the proposition. We also require that $(1+\sqrt{3})\eps$ is less than half the distance between the balls which, we recall, is bounded from below by $\frac{p}{2(1+p)}$. This is to ensure that the boundary layers are disjoint. The balls are all $(r_0,L_0)$--Lipschitz with $r_0$ and $L_0$ independent of $j,K,n$. Hence $\eps$ is in fact universal here.

Then we can write the set $D_K$ defined previously in~\eqref{eq:cheese} as the disjoint union (up to a negligible set)
$$D_K=\bigcup_{j=0}^{K-1}\bigcup_{n=1}^{n_{K-j}}\cB_{j,n}\cup \cB_K\cup\bigcup_{z\in \cZ_K}C_z$$
where $\cZ_K\subset\Z^3$ contains the full cubes $C_z\subset D_K$ that do not appear in the boundary layers. In each $C_z$ with $z\in\cZ_K$ we place the same set $\Omega=\eps z+\rho^{1/3}C_{0}$ as we did before in the construction of the quadrupole layer in Proposition~\ref{prop:dipole_layer}. We decompose the outer cubic layers into the subsets $\Lambda_\alpha$ given by Proposition~\ref{prop:dipole_layer}. To simplify our notation we now write
$$D_K=\bigcup_\alpha\Lambda_\alpha$$
where $\Lambda_\alpha$ correspond either to the sets in a boundary layer, or to a full cube $C_z$ with $z\in\cZ_K$. The corresponding competitor is
$$\Omega=\bigcup_\alpha\Omega_\alpha$$
where each $\Omega_\alpha\subset\Lambda_\alpha$ satisfies~\eqref{eq:local_dipoles}. We denote by $\mu_\alpha:=\1_{\Omega_\alpha}-\rho\1_{\Lambda_\alpha}$ the corresponding measure. From the statement of Proposition~\ref{prop:dipole_layer}, the number of sets $\Lambda_\alpha$ coming from one given $\cB_{j,n}$ is bounded by $CR_j^2=C\Per(B_j)$ (recall that $\eps$ is universal). Similarly for $\cB_K$. Hence by~\eqref{eq:estim_perimeter}, the total number of such $\Lambda_\alpha$'s can be estimated by $C\gamma^K|B_K|$. Similarly, the number of cubes in $\cZ_K$ is bounded by $C|D_K|\leq C\gamma^K|B_K|$. Finally, the perimeter of $\Omega$ can be estimated by the number of components, that is again bounded above by $C\gamma^K|B_K|$. The Coulomb energy is equal to
$$\sum_{\alpha,\alpha'}D(\mu_\alpha,\mu_{\alpha'}).$$
Since the $\mu_\alpha$ are neutral, have no dipole and are contained in a set of diameter less that $C\eps$, we have by ~\cite[Lemma~2.1]{AnaLew-20}
$$\left|D(\mu_\alpha,\mu_{\alpha'})\right|\leq \frac{C}{1+\rd(\Lambda_\alpha,\Lambda_{\alpha'})^5}.$$
The series is therefore convergent for each $\alpha$ and the full double sum can be estimated by the number of sets $\Lambda_\alpha$'s, that is, $C\gamma^K|B_K|$. We have thus proved the claimed inequality~\eqref{eq:estim_cheese}. This concludes the proof of the convergence of the energy for the special sequence $B_j$ of exponentially growing balls.

\subsection{Upper bound for any regular set}\label{sec:upper-bound_thermo}
Next we turn to the upper bound for an arbitrary sequence of sets $\Lambda_n\nearrow \R^3$ satisfying the conditions of Theorem~\ref{thm:thermo_limit}. The idea is to again pack $\Lambda_n$ with many balls congruent to $B_0,...,B_{K-1}$ for some fixed $K$. To this end, it is easier to start packing the large cube $C'_0=|B_K|^{1/3}(-1/2,1/2)^3$ of the same volume as $B_K$ and then use this cube to pack $\Lambda_n$. By~\cite[Lemma~4.1]{LieLeb-72} the cube $C'_0$ can be packed by the same number  $n_{K-j}$ of copies of the ball $B_j$, with the same estimate on the distance between the boundaries.

We thus consider the new tiling $C'_z:=h_Kz+C'_0$ with $h_K=|B_K|^{1/3}$ and $z\in\Z^3$. We let $\cZ'_i$ denote the indices $z\in\Z^3$ so that $C'_z\subset\Lambda_n$ and $\rd(C'_z,\partial\Lambda_n)>h_K$. Similarly we denote by $\cZ'_b$ the set of the indices $z$ such that $C'_z$ intersects $\Lambda_n$ and $\rd(C'_z,\partial\Lambda_n)\leq h_K$. Hence, up to a negligible set,
$$\Lambda_n=\bigcup_{z\in\cZ'_i}C_z'\cup\bigcup_{z\in\cZ'_b}(\Lambda_n\cap C'_z).$$
In all of the cubes $C'_z$ with $z\in\cZ_i'$, we place the packing of balls mentioned previously. We leave the other cubes $C'_z$ with $z\in\cZ'_b$ completely empty (see Figure~\ref{fig:packing-upper-lower}, left). The number of such cubes can be estimated by
$$\#\cZ'_b\leq |B_K|^{-1}\left|\left\{x\in\Lambda_n\ :\ \rd(x,\partial\Lambda_n)\leq (\sqrt3+1)h_K\right\}\right|\leq C\ell_n^2h_K$$
due to the Fisher regularity condition~\eqref{eq:Fisher} with $h=h_K$ (fixed). Similarly, the number of cubes inside satisfies
$$\#\cZ'_i= |B_K|^{-1}|\Lambda_n|+O(\ell_n^2h_K).$$
We let $D_n$ denote the complement of the balls in $\Lambda_n$ and we proceed with the same construction as before, with a smaller lattice of side length $\eps$. We get a quadrupole layer for $\R^3\setminus\Lambda_n$ using the regularity of its boundary (independent of $n$) as well as for all the balls contained in the large cubes inside. We assume $\eps$ is small enough so as to be able to apply Proposition~\ref{prop:dipole_layer} (now $\eps$ depends on the regularity assumption on $\Lambda_n$). Like in the previous section we find
$$E_{D_n}(\rho)\leq C|D_n|.$$
In any given large cube $C'_z\subset\Lambda_n$ with $z\in\cZ'_i$, we have as before $|D_n\cap C'_z|\leq C\gamma^K|B_K|$ so that after summing we obtain $|D_n|\leq C\gamma^K|\Lambda_n|+C\ell_n^2h_K$. Hence we have
$$E_{D_n}(\rho)\leq C\gamma^K|\Lambda_n|+C\ell_n^2h_K.$$
The number of balls of a type $B_j$ is $n_{K-j}\#\cZ'_i$ hence we get from the same arguments as in the previous subsection
\begin{align}
E_{\Lambda_n}(\rho)&\leq \#\cZ'_i\sum_{j=0}^{K-1}n_{K-j}E_{B_j}(\rho)+E_{D_n}(\rho)\nn\\
&\leq \#\cZ'_i\sum_{j=0}^{K-1}n_{K-j}E_{B_j}(\rho)+C\gamma^K|\Lambda_n|+C\ell_n^2h_K.\label{eq:upper_bound_thermo}
\end{align}
After taking the limit $n\to\ii$ at fixed $K$, we obtain
$$\limsup_{n\to\ii}\frac{E_{\Lambda_n}(\rho)}{|\Lambda_n|}\leq \frac1{|B_K|}\sum_{j=0}^{K-1}n_{K-j}E_\rho(B_j)+C\gamma^K.$$
Taking finally $K\to\ii$, using the existence of the limit for balls proved in the first step, we obtain the claimed upper bound
$$\limsup_{n\to\ii}\frac{E_{\Lambda_n}(\rho)}{|\Lambda_n|}\leq e(\rho).$$

\begin{figure}[t]
\includegraphics[width=10cm]{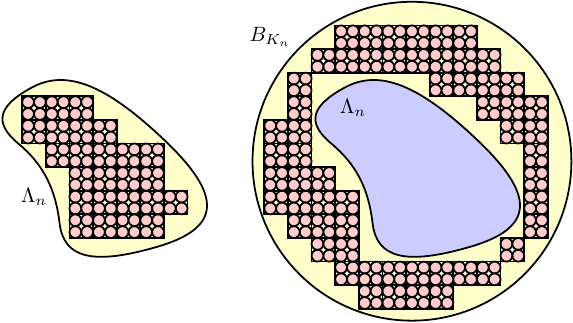}
\caption{Idea of the proof of the thermodynamic limit for an arbitrary sequence $\Lambda_n\nearrow\R^3$, following~\cite{LieLeb-72,LieNar-75}. To get an upper bound on $E_{\Lambda_n}(\rho)$ we pack the domain $\Lambda_n$ by many small balls, using a cubic tiling as guiding principle (only the largest balls of the type $B_{K-1}$ are displayed in the picture). To get a lower bound on $E_{\Lambda_n}(\rho)$ we pack the complement of $\Lambda_n$ in a large ball $B_{K_n}$ of comparable volume.\label{fig:packing-upper-lower}}
\end{figure}

\subsection{Lower bound for any regular set.}
To prove the similar lower bound, we use that $\Lambda_n$ is included in a ball $B(0,\ell_n)$ of volume comparable with $|\Lambda_n|$, by~\eqref{eq:ass_balls_Lambda_n}. We let $K_n$ denote the unique integer such that
$$R_{K_n-1}<\ell_n\leq R_{K_n}.$$
Since $R_{K_n-1}=R_{K_n}(1+p)^{-1}\frac{1-\theta^{K_n-1}/2}{1-\theta^{K_n}/2}$ we conclude that
$$C^{-1}|B_{K_n}|\leq |\Lambda_n|\leq |B_{K_n}|$$
for some constant $C$. In other words the large ball $B_{K_n}$ has a volume of the same order as $B(0,\ell_n)$ and $\Lambda_n$.

Next we pack the complement $B_{K_n}\setminus \Lambda_n$ in the same way as we did in the previous step, using the tiling of cubes $C'_z$ of volume $|B_K|$ and the inside packing of balls (see Figure~\ref{fig:packing-upper-lower}, right).
One difference is that $K$ will actually depend on $n$ and cannot stay fixed, as we did previously. We will need $1\ll K\ll K_n$ and the exact dependence in $n$ will be determined at the end of the argument. As before we define $D_n$ to be the complement of the balls inside $B_{K_n}\setminus \Lambda_n$. We use a tiling of side length $\eps$ (depending on the regularity of $\Lambda_n$) and Proposition~\ref{prop:dipole_layer} to decompose $D_n=\cup_\alpha\Lambda_\alpha$ as before, with trial sets $\Omega_\alpha$ satisfying the no-dipole property~\eqref{eq:local_dipoles}.

In order to get the exact energy $E_{\Lambda_n}(\rho)$, we are forced to take a minimizer $\Omega_n\subset\Lambda_n$. Hence the Coulomb interaction between $\Lambda_n$ and $D_n$ is not suppressed and we obtain the upper bound
\begin{multline}
E_{B_{K_n}}(\rho)\leq \#\cZ'_i\sum_{j=0}^{K-1}n_{K-j}E_{B_j}(\rho)+E_{\Lambda_n}(\rho)+E_{D_n}(\rho)\\
+2\sum_\alpha D\Big(\1_{\Omega_\alpha}-\rho\1_{\Lambda_\alpha},\1_{\Omega_n}-\rho\1_{\Lambda_n}\Big).
\label{eq:lower_bound_energy}
\end{multline}
Here $\cZ'_i$ denotes the sets of indices $z\in\Z^3$ such that $C'_z$ is included in $B_{K_n}\setminus \Lambda_n$ at distance $> h_K$ from its boundary.

The number of $\Lambda_\alpha$'s can be estimated by a constant (depending on $\eps$) times
\begin{multline}
\gamma^K\#\cZ_i'|B_K|+\left|\partial\Lambda_n+B\big(0,(1+\sqrt3) h_K\big)\right|\\+\left|\{R_{K_n}-(1+\sqrt3) h_K\leq|x|\leq R_{K_n}\}\right|\\
\leq C\ell_n^3\left(\gamma^K+\frac{R_K}{R_{K_n}}\right)\leq C\ell_n^3\left(\gamma^K+\gamma^{K_n-K}\right),
\label{eq:nb_quadrupoles}
\end{multline}
where the three terms on the first line come from the small balls in the cubes $C'_z$ with $z\in\cZ_i'$, and from the parts close to $\partial\Lambda_n$ and $\partial B_{K_n}$, respectively. Recall that $h_K=|B_K|^{1/3}=(4 \pi/3)^{1/3} R_K$.  In the first inequality we used the Fisher regularity assumption~\eqref{eq:Fisher} on the boundary of $\Lambda_n$. Arguing as in the previous subsection, we obtain
$$E_{D_n}(\rho)\leq C\ell_n^3\left(\gamma^K+\gamma^{K_n-K}\right).$$

The next step is to handle the Coulomb interaction between the quadrupoles and $\Lambda_n$ in~\eqref{eq:lower_bound_energy}. We use that the Coulomb potential generated by a quadrupole decays like $1/|x|^3$~\cite[Lemma~2.1]{AnaLew-20}, that is,
$$\left|(\1_{\Omega_\alpha}-\rho\1_{\Lambda_\alpha})\ast \frac1{|\cdot|}(x)\right|\leq\frac{C}{1+|x-z|^3}$$
where $z$ is the center of the main cube $C_z$ contained in $\Lambda_\alpha$. Hence the interaction between one quadrupole and $\Lambda_n$ can be estimated by
\begin{align}
\left|D\left(\1_{\Omega_n}-\rho\1_{\Lambda_n},\1_{\Omega_\alpha}-\rho\1_{\Lambda_\alpha}\right)\right|&\leq C\int_{\Lambda_n}\frac{\dx}{1+|x-z|^3}\nn\\
&\leq C\int_{B_{K_n}}\frac{\dx}{1+|x|^3}\nn\\
&\leq C\log(R_{K_n})\leq CK_n.\label{eq:estim_interaction_quadrupole_Lambda}
\end{align}
Several of our $\Lambda_\alpha$'s have a vanishing quadrupole. Those generate a potential decaying like $1/|x|^4$ and the interaction with $\Lambda_n$ is then bounded (there is no $K_n$ on the right side). Since the worse terms will anyway come from the quadrupole layers around the balls, for simplicity we do not distinguish the different cases and use the same bound~\eqref{eq:estim_interaction_quadrupole_Lambda} for all the $\Lambda_\alpha$'s. Using the estimate~\eqref{eq:nb_quadrupoles} on the number of quadrupoles, we arrive at
\begin{multline}
\frac{E_{B_{K_n}}(\rho)}{|B_{K_n}|}\leq \left(1-\frac{|\Lambda_n|}{|B_{K_n}|}\right)\frac1p\sum_{j=0}^{K-1}\gamma^{K-j}\frac{E_{B_j}(\rho)}{|B_j|}+\frac{|\Lambda_n|}{|B_{K_n}|}\frac{E_{\Lambda_n}(\rho)}{|\Lambda_n|}\\
+C\left(\gamma^K+\gamma^{K_n-K}\right)K_n.
\end{multline}
Here we see that we cannot take $K$ fixed as we did for the upper bound, since we need that the last term tends to 0. We take for instance $K=\lfloor\kappa\log(K_n)\rfloor$ with $\kappa\geq 2/\log(\gamma^{-1})$. The convergence towards $e(\rho)$ of the terms involving the standard balls $B_j$ and the fact that $|\Lambda_n|/|B_{K_n}|$ is bounded away from 0 then imply
$$\liminf_{n\to\ii}\frac{E_{\Lambda_n}(\rho)}{|\Lambda_n|}\geq e(\rho).$$
This concludes the proof of Theorem~\ref{thm:thermo_limit}.\qed

\subsection{The periodic problem}
That the periodic model admits the same limit $e(\rho)$, as stated in~\eqref{eq:thermo_limit_periodic}, can be shown by following~\cite[Section V]{LewLieSei-19b}. Let us call
$$\cE_L^{\rm per}[\Omega]:=\Per(\Omega)+\frac12\iint_{\Omega^2}G_L(x-y)\dx\,\dy,\qquad E_L^{\rm per}(\rho):=\min_{|\Omega|=\rho L^3}\cE_L^{\rm per}[\Omega]$$
the periodic energy in a cube $\Lambda_L=(-L/2,L/2)^3$. We take a minimizer $\Omega$ with vanishing center of mass. We then repeat this configuration periodically over a growing cube exactly as we did in Section~\ref{sec:proof_upper_bound}. We find
$$e(\rho)\leq \frac{\cE_L^{\rm per}[\Omega]}{L^3}+\frac{M_{\Z^3}}{2L}.$$
Taking $L\to\ii$ gives
$$e(\rho)\leq \liminf_{L\to\ii}\frac{E_L^{\rm per}(\rho)}{L^3}.$$
To obtain the reverse bound, we pack $\Lambda_L$ with balls as explained in Subsection~\ref{sec:upper-bound_thermo} and in~\cite{LewLieSei-19b}. We choose $\eps\in L/\N$ to avoid having any partial cubes at the boundary of $\Lambda_L$ (and thus be consistent with the periodic boundary condition). Using that Newton's theorem applies in the periodic case as well, we find the same bound as~\eqref{eq:upper_bound_thermo} on $E_L^{\rm per}(\rho)$ and this gives
$$e(\rho)\geq \limsup_{L\to\ii}\frac1{L^3}\min_{\Omega}\cE_L^{\rm per}[\Omega].$$
\qed

\printbibliography

\end{document}